\newcommand{\indep}{\perp \!\!\! \perp}
\theoremstyle{plain}
\newtheorem{theorem}{Theorem}[section]
\theoremstyle{definition}
\newtheorem{definition}[theorem]{Definition}
\newtheorem{assumption}[theorem]{Assumption}
\theoremstyle{remark}
\icmltitlerunning{Proper Scoring Rules for Survival Analysis}
\begin{document}

\twocolumn[
\icmltitle{Proper Scoring Rules for Survival Analysis}



\icmlsetsymbol{equal}{*}

\begin{icmlauthorlist}
\icmlauthor{Hiroki Yanagisawa}{ibm}
\end{icmlauthorlist}

\icmlaffiliation{ibm}{IBM Research - Tokyo, Tokyo, Japan}

\icmlcorrespondingauthor{Hiroki Yanagisawa}{yanagis@jp.ibm.com}

\icmlkeywords{Scoring rule, survival analysis, time-to-event analysis, censoring}

\vskip 0.3in
]



\printAffiliationsAndNotice{}  

\begin{abstract}
Survival analysis is the problem of estimating probability distributions for future event times, which can be seen as a problem in uncertainty quantification. Although there are fundamental theories on strictly proper scoring rules for uncertainty quantification, little is known about those for survival analysis. In this paper, we investigate extensions of four major strictly proper scoring rules for survival analysis and we prove that these extensions are proper under certain conditions, which arise from the discretization of the estimation of probability distributions. We also compare the estimation performances of these extended scoring rules by using real datasets, and the extensions of the logarithmic score and the Brier score performed the best.
\end{abstract}

\section{Introduction}

The theory of {\em scoring rules} is a fundamental theory in statistical analysis, and it is widely used in uncertainty quantification (see, e.g.,~\cite{mura2008,parmigiani2009,Ben10,STW15}).  Suppose that there is a random variable $Y$ whose cumulative distribution function (CDF) is $F_{Y}$.  Given an estimation $\hat{F}_{Y}$ of $F_{Y}$ and a single sample $y$ obtained from $Y$, a scoring rule $S(\hat{F}_{Y}, y)$ is a function that returns an evaluation score for $\hat{F}_{Y}$ based on $y$.  Since $\hat{F}_{Y}$ is a CDF and $y$ is a single sample of $Y$, it is not straightforward to choose an appropriate scoring rule $S(\hat{F}_{Y}, y)$.  The theory of scoring rules suggests how to choose an appropriate one.  This theory proves that a certain class of scoring rules satisfies this natural property: the average evaluation score $S(\hat{F}_{Y}, y)$ over $y \sim Y$ is minimized only by the true CDF $F_{Y}$.  A scoring rule that satisfies this property is called {\em strictly proper} in this theory.  Examples of strictly proper scoring rules include the pinball loss, the logarithmic score, the Brier score, and the ranked probability score (see, e.g.,~\cite{GR07} for the definitions of these scoring rules).  In uncertainty quantification, it is standard to use a strictly proper scoring rule both for a loss function to train machine learning models and for an evaluation metric to evaluate the models.  Note that, if we use a non-proper scoring rule $S(\hat{F}_{Y}, y)$ as a loss function, a prediction model (e.g., a neural network model) might find an estimation $\hat{F}_{Y}$ such that $S(\hat{F}_{Y}, y) < S(F_{Y}, y)$ holds for $y \sim Y$ on average and such $\hat{F}_{Y}$ could be very different from true $F_{Y}$.

{\em Survival analysis}, which is also known as {\em time-to-event analysis}, can be seen as a problem in uncertainty quantification.  Despite the long history of research from the seminal work~\cite{Cox72} on survival analysis (see, e.g.,~\cite{WLR19} for a comprehensive survey), little is known about the strictly proper scoring rules for survival analysis.  Therefore, we investigate extensions of these strictly proper scoring rules for survival analysis.

In survival analysis, the time between a well-defined starting point and the occurrence of an event is called the {\em survival time} or {\em event time}, and the goal of survival analysis is to estimate the probability distribution of event times.  In healthcare applications, an event usually corresponds to an undesirable event for a patient (e.g., a death or the onset of disease).  Survival analysis has important applications in many fields such as credit scoring~\cite{DCB17} and fraud detection~\cite{ZYW19} as well as healthcare.  

Datasets for survival analysis are {\em censored}, which means that events of interest might not be observed for a number of data points. This may be due to either a limited observation time window or missing traces caused by other irrelevant events.  In this paper, we consider only {\em right censored} data, which is a widely studied problem setting in survival analysis.  The exact event time of a right censored data point is unknown; we know only that the event had not happened up to a certain time for the data point.  The time between a well-defined starting point and the last observation time of a right censored data point is called the {\em censoring time}.

Many neural network models have been proposed for survival analysis (e.g.,~\cite{ADZJSN19,KW21,PJJZ22}).  A common problem with these models is that they define their own custom loss functions, and they use these loss functions without proving that they are strictly proper in terms of the theory of scoring rules.  Indeed, Rindt et al.~\yrcite{RHSS22} show that the loss functions used in~\cite{ADZJSN19,KW21} are not proper.  Moreover, survival models have been evaluated by custom evaluation metrics without proving that these metrics are proper in terms of the theory of scoring rules.  Popular metrics used for survival analysis include the integrated Brier score~\cite{GSSS99} and variants of C-index~\cite{ABB05,UCPDW11}.  However, all of them are not proper~\cite{BKG18,RHSS22}.  We also note that Sonabend et al.~\yrcite{SBV22} discuss the problems of using these variants of C-index as evaluation metrics in survival analysis.

The only exception to the above argument is~\cite{RHSS22}.  This paper shows a rigorous proof that an extension of the logarithmic score for survival analysis is strictly proper.  Note that this paper is not the first one that uses this extension of the logarithmic score (e.g.,~\cite{LZYS18,RQZYZQY19,THW21}).  However, it is usually  used in {\em part} of the loss functions of the proposed models, and these loss functions are used without the proof of properness.

\paragraph{Our contributions.}
We analyze survival analysis through the lens of the theory of scoring rules.  First, we prove that Portnoy's estimator~\cite{Por03}, which is an extension of the pinball loss for survival analysis, is proper under certain conditions.  This is the first proof for the properness for Portnoy's estimator.  In addition, we show such conditions are due to the discretization of the estimation of a probability distribution and we explain why such conditions are required to be proper scoring rules for survival analysis.  Second, we show that the proof of strict properness of the extension of the logarithmic score~\cite{RHSS22} is based on implicit assumptions by showing its alternative proof.  Third, we show two new proper scoring rules for survival analysis under certain conditions by extending the Brier score and the ranked probability score. These scoring rules are the first scoring rules with rigorous proofs of properness as extensions of the Brier score and the ranked probability score.  Finally, we compare these four extensions of the scoring rules by using real datasets, and the results show that the extensions of the logarithmic score and Brier score performed the best.

\section{Related Work}

Survival analysis has been traditionally studied under the {\em proportional hazard assumption}.  Its seminal work is the Cox model~\cite{Cox72}, and many other prediction models have been proposed under this strong assumption.  Since outputs of these models are scalar values called {\em hazard rates} and are not CDFs, we use different types of loss functions and evaluation metrics in traditional survival analysis.  One of the popular evaluation metrics is the concordance index (C-index)~\cite{HCPLR82}, which is a generalization of the Kendall rank correlation coefficient.  See, e.g.,~\cite{WLR19} for a comprehensive survey on survival analysis based on this assumption.  In this paper, we focus on survival analysis {\em without} this assumption.

We note that there are many loss functions used in survival models that can be seen as variants of known scoring rules.  
\begin{itemize}
\item {\bf Pinball loss.}  Portnoy's estimator~\cite{Por03}, which is an extension of the pinball loss, has been used in quantile regression-based survival analysis~\cite{Por03,NBP06,PJJZ22}.  It was unknown if this estimator is proper or not in terms of the theory of scoring rules, and we are the first to prove that this estimator is proper under a certain condition.
\item {\bf Brier score.}   The IPCW Brier score~\cite{GSSS99} and integrated Brier score~\cite{GSSS99} are widely used in survival analysis (e.g.,~\cite{KBS19,HHDG20,HGPWPR21,ZMW21}) as variants of the Brier score.  However, Rindt et al.~\yrcite{RHSS22} show that neither of them is proper in terms of the theory of scoring rules.
\item {\bf Ranked probability score.}  Variants of the ranked probability score have been proposed in \cite{ADZJSN19,KW21}, but Rindt et al.~\yrcite{RHSS22} show that they are not proper in terms of the theory of scoring rules.
\end{itemize}

\section{Preliminaries}

Given a feature vector $x \in X$, let $T_{x}$ and $C_{x}$ be random variables for the event time and censoring time of $x$, respectively.  Unless otherwise stated, we consider a fixed $x$, and we denote them by $T$ and $C$ by omitting the subscript $x$ from $T_{x}$ and $C_{x}$, respectively.

Let $t \sim T$ and $c \sim C$ be samples obtained from $T$ and $C$, respectively.  We assume that $t$ and $c$ are positive real values (i.e., $t \in \mathbb{R}^{+}$ and $c \in \mathbb{R}^{+}$).  In survival analysis, we can observe only the minimum $z = \min \{ t, c \}$, and we use $\delta = \mathbbm{1}(t \leq c)$ to indicate whether $z$ represents the true event time (i.e., $\delta=1$ means $z$ is uncensored and $z=t$) or $z$ represents the censoring time (i.e., $\delta=0$ means $z$ is censored and $z=c$).  In this paper, a pair of samples $(t,c)$ is often represented as a pair of values $(z, \delta)$ to emphasize that we can observe only one of $t$ and $c$.  We assume that we have prior knowledge that $z$ is at most $z_{\max}$ (i.e., $0 < z \leq z_{\max}$ holds for any $z$).  Let $F(t)$ be the CDF of $T$, which is defined as $F(t) = \mathrm{Pr}(T \leq t)$.   By the definition of $F(t)$, we have $F(0) = 0$, and we can represent the probability that the true event time is between $t_{1}$ and $t_{2}$ as $\mathrm{Pr}(t_{1} < T \leq t_{2}) = F(t_{2}) - F(t_{1})$.
  
Survival analysis is the problem of estimating the $\hat{F}(t)$ of the true CDF $F(t)$.  For simplicity, we assume that both $F(t)$ and $\hat{F}(t)$ are monotonically increasing continuous functions.  This means that $F(t_{1}) < F(t_{2})$ holds if and only if $0 \leq t_{1} < t_{2} < \infty$.  This assumption enables us to calculate $F(t)$ for any time $0 \leq t < \infty$ and to calculate $F^{-1}(\tau)$ for any quantile level $0 \leq \tau \leq 1$.  When we estimate $\hat{F}(t)$ by using a prediction model (e.g., a neural network), we usually discretize $p = \hat{F}(t)$ along the $p$-axis or the $t$-axis as shown in Fig.~\ref{fig:uncertainity_quantification}.  In quantile regression-based survival analysis, $p=\hat{F}(t)$ is discretized along the $p$-axis, $\hat{F}^{-1}(\tau_{i})$ is estimated for $0 = \tau_{0} < \tau_{1} < \cdots < \tau_{B-1} < \tau_{B} = 1$, and we assume that $\hat{F}^{-1}(\tau_{0}) = 0$ and $\hat{F}^{-1}(\tau_{B}) = z_{\max}$.   In distribution regression-based survival analysis, $p=\hat{F}(t)$ is discretized along the $t$-axis, $\hat{F}(\zeta_{i})$ is estimated for $0 = \zeta_{0} < \zeta_{1} < \cdots < \zeta_{B-1} < \zeta_{B} = z_{\max}$, and we assume that $\hat{F}(\zeta_{0})=0$ and $\hat{F}(\zeta_{B})=1$.

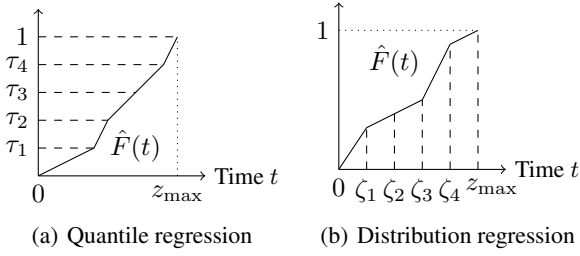
\begin{figure}[t]
  \subfigure[Quantile regression]{
\centering
\begin{tikzpicture}[scale=1.85]
\draw (0,0)--(0.4,0.2)--(0.5,0.4)--(0.7,0.6)--(0.9,0.8)--(1.0,1.0);
\draw[dashed] (0,0.2)--(0.4,0.2);
\draw[dashed] (0,0.4)--(0.5,0.4);
\draw[dashed] (0,0.6)--(0.7,0.6);
\draw[dashed] (0,0.8)--(0.9,0.8);
\draw[dashed] (0,1)--(1,1);
\node at (0,0) [below] {$0$}; 
\node at (0,0.2) [left] {$\tau_{1}$};
\node at (0,0.4) [left] {$\tau_{2}$};
\node at (0,0.6) [left] {$\tau_{3}$};
\node at (0,0.8) [left] {$\tau_{4}$};
\node at (0,1.0) [left] {$1$};
\draw [->] (0,0)--(0,1.2);
\draw [->] (0,0)--(1.2,0);
\node at (0.7,0.4) [below] {$\hat{F}(t)$};
\node at (1.2,0) [right] {\small Time $t$};
\draw[dotted] (1,0)--(1,1);
\node at (1,0) [below] {$z_{\max}$};
\end{tikzpicture}
  }
  \subfigure[Distribution regression]{
\centering
\begin{tikzpicture}[scale=1.85]
\draw (0,0)--(0.2,0.3)--(0.4,0.4)--(0.6,0.5)--(0.8,0.9)--(1.0,1.0);
\draw[dashed] (0.2,0)--(0.2,0.3);
\draw[dashed] (0.4,0)--(0.4,0.4);
\draw[dashed] (0.6,0)--(0.6,0.5);
\draw[dashed] (0.8,0)--(0.8,0.9);
\draw[dashed] (1,0)--(1,1);
\node at (0,0) [below] {$0$}; 
\node at (0.2,0) [below] {$\zeta_{1}$};
\node at (0.4,0) [below] {$\zeta_{2}$};
\node at (0.6,0) [below] {$\zeta_{3}$};
\node at (0.8,0) [below] {$\zeta_{4}$};
\node at (1.1,0) [below] {$z_{\max}$};
\draw [->] (0,0)--(0,1.2);
\draw [->] (0,0)--(1.2,0);
\node at (0,1.0) [left] {$1$};
\node at (0.4,0.6) [above] {$\hat{F}(t)$};
\node at (1.2,0) [right] {\small Time $t$};
\draw[dotted] (0,1)--(1,1);
\end{tikzpicture}
  }
\caption{Two types of discretization of probability distribution $\hat{F}(t)$ with $B=5$.}
\label{fig:uncertainity_quantification}
\end{figure}

Throughout this paper, we assume that the censoring time and the event time are independent of each other given a feature vector $x$.  This assumption is usually represented as follows.
\begin{assumption}\label{assumption:independence}
$T \indep C | X$.
\end{assumption}
This assumption is widely used in survival analysis (e.g., the classical Kaplan-Meier estimator~\cite{KM58} and the calibration metric D-calibration~\cite{HHDG20}).  We can find examples of the other stronger assumptions (e.g., unconditionally random right censoring) used in survival analysis in~\cite{Peng21}.

\section{Proper Scoring Rules for Survival Analysis}\label{sec:scoring_rule}

We briefly review the theory of scoring rules for uncertainty quantification.  Let $Y$ be a random variable, and let $F_{Y}(y)$ be its CDF, which is defined as $F_{Y}(y) = \mathrm{Pr}(Y \leq y)$. A {\em scoring rule} is a function $S(\hat{F}_{Y}, y)$ that returns a real value (i.e., an evaluation score) for inputs $\hat{F}_{Y}$ and $y$, where $\hat{F}_{Y}$ is an estimation of $F_{Y}$ and $y$ is a sample obtained from $Y$.  In this paper, we consider {\em negatively-oriented} scoring rules, which means that a smaller score is better.  We can interpret the scoring rule $S(\hat{F}_{Y}, y)$ as a penalty function for the misestimation of $\hat{F}_{Y}$ for a sample $y$.

The {\em proper} and {\em strictly proper} scoring rules are defined as follows.

\begin{definition}\label{def:original_proper}
A scoring rule $S(\hat{F}_{Y}, y)$ is {\em proper} if
\[ \mathop{\mathbb{E}}_{y \sim Y} [S(\hat{F}_{Y}, y)] \geq \mathop{\mathbb{E}}_{y \sim Y} [S(F_{Y}, y)]. \]
\end{definition}

\begin{definition}\label{def:original_strict_proper}
A scoring rule $S(\hat{F}_{Y}, y)$ is {\em strictly proper} if \[ \mathop{\mathbb{E}}_{y \sim Y} [S(\hat{F}_{Y}, y)] \geq \mathop{\mathbb{E}}_{y \sim Y} [S(F_{Y}, y)] \]
holds and the equality holds only when $\hat{F}_{Y} = F_{Y}$.
\end{definition}

These definitions are based on a natural property that any scoring rule should satisfy.  Definition~\ref{def:original_strict_proper} means that we can recover the true $F_{Y}$ by minimizing the average evaluation score $S(\hat{F}_{Y}, y)$ over $y \sim Y$ for a strictly proper scoring rule $S(\cdot,\cdot)$.

We extend these definitions of the proper and strictly proper scoring rules for survival analysis.   We define the {\em proper} and {\em strictly proper} scoring rules for survival analysis by changing the inputs of a scoring rule $S(\hat{F}, (z,\delta))$ from $F_{Y}$ and $y$ to $F$ and $(z,\delta)$.

\begin{definition}\label{def:survival_proper}
A scoring rule $S(\hat{F}, (z,\delta))$ is {\em proper} if
\[ \mathop{\mathbb{E}}_{(t,c) \sim (T,C)} [S(\hat{F}, (z,\delta))] \geq \mathop{\mathbb{E}}_{(t,c) \sim (T,C)} [S(F, (z,\delta))]. \]
\end{definition}

\begin{definition}\label{def:survival_strict_proper}
A scoring rule $S(\hat{F}, (z,\delta))$ is {\em strictly proper} if
\[ \mathop{\mathbb{E}}_{(t,c) \sim (T,C)} [S(\hat{F}, (z,\delta))] \geq \mathop{\mathbb{E}}_{(t,c) \sim (T,C)} [S(F, (z,\delta))] \]
holds and the equality holds only when $\hat{F} = F$.
\end{definition}

Following the standard approach of using a strictly proper scoring rule in uncertainty quantification~\cite{BHW22}, we explain how to use a scoring rule $S(\hat{F}, (z,\delta))$ as a loss function in survival analysis.  Given a training dataset $\{ (x^{(i)}, z^{(i)}, \delta^{(i)}) \}_{i=1}^{n}$, we formulate survival analysis as minimizing the empirical loss
\[ \sum_{i=1}^{n} S(\hat{F}_{x^{(i)}}, (z^{(i)}, \delta^{(i)})), \]
where $\hat{F}_{x^{(i)}}$ is an estimation of the true CDF $F_{x^{(i)}}$ of random variable $T_{x^{(i)}}$ for $x^{(i)} \in X$.  This formulation assumes that each $x^{(i)} \in X$ has an underlying random variable $T_{x^(i)}$ for event times, and our task is to find a good estimation $\hat{F}_{x^{(i)}}$ for each $x^{(i)}$.

In this paper, we investigate the extensions of the scoring rules for survival analysis.  In Sec.~\ref{sec:portnoy}, we consider quantile regression and survival analysis based on quantile regression.  In Secs.~\ref{sec:logarithmic}--\ref{sec:RPS}, we consider distribution regression and survival analysis based on distribution regression.

\subsection{Extension of Pinball Loss}\label{sec:portnoy}

We first review quantile regression~\cite{KB78,KH01}.  Let $Y$ be a real-valued random variable and $F_{Y}$ be its CDF.  In quantile regression, we estimate the $\tau$-th quantile of $Y$, which can be written as
\[ F^{-1}_{Y}(\tau) = \inf \{ y \mid F_{Y}(y) \geq \tau \}. \]
The {\em pinball loss}~\cite{KB78}, which is also known as the {\em check function}, is a widely used scoring rule.  The pinball loss for an estimation $\hat{F}_{Y}$ of $F_{Y}$ and a quantile level $\tau \in [0,1]$ is defined as
\begin{eqnarray}
\lefteqn{S_{\rm Pinball}(\hat{F}_{Y}, y; \tau)} \nonumber \\
& = & \rho_{\tau}(\hat{F}^{-1}_{Y}(\tau), y) \nonumber \\
& = &
\begin{cases}
(1 - \tau)(\hat{F}^{-1}_{Y}(\tau) - y) & \mbox{if} \ \hat{F}^{-1}_{Y}(\tau) \geq y, \\
\tau (y - \hat{F}^{-1}_{Y}(\tau)) & \mbox{if} \ \hat{F}^{-1}_{Y}(\tau) < y.
\end{cases}
\label{eq:pinball}
\end{eqnarray}
Note that the pinball loss with $\tau=0.5$ is equivalent to the mean absolute error (MAE), and it can be used to estimate the {\em median} (i.e., $0.5$-th quantile) of $Y$.  This means that the pinball loss is a generalization of MAE for any quantile level $\tau \in [0,1]$.  Note also that we include the quantile level $\tau$ in the notation $S_{\rm Pinball}(\hat{F}^{-1}_{Y}, y; \tau)$ to clarify that this scoring rule receives $\tau$ as an input.

It is known that the pinball loss is strictly proper (see e.g.,~\cite{GR07}), which means that we have
\[ \mathop{\mathbb{E}}_{y \sim Y} [ S_{\rm Pinball}(\hat{F}_{Y}, y; \tau) ] \geq \mathop{\mathbb{E}}_{y \sim Y} [ S_{\rm Pinball}(F_{Y}, y; \tau) ], \]
and the equality holds only when $\hat{F}^{-1}_{Y}(\tau) = F^{-1}_{Y}(\tau)$ by Definition~\ref{def:original_strict_proper}.  Therefore, it is standard to use the pinball loss both for a loss function and an evaluation metric in quantile regression.

{\em Portnoy's estimator}~\yrcite{Por03} is an extension of the pinball loss for quantile regression-based survival analysis, which is defined as
\begin{eqnarray}
\lefteqn{S_{\rm Portnoy}(\hat{F}, (z,\delta); w, \tau)} \nonumber \\
& = &
\begin{cases}
\rho_{\tau}(\hat{F}^{-1}(\tau), z) & \mbox{if} \ \delta = 1, \\
w \rho_{\tau}(\hat{F}^{-1}(\tau), z) & \\
\ + (1-w) \rho_{\tau}(\hat{F}^{-1}(\tau), z_{\infty}) & \mbox{if} \ \delta = 0,
\end{cases}  \label{eq:Portnoy}
\end{eqnarray}
where $\rho_{\tau}$ is the pinball loss defined in Eq.~(\ref{eq:pinball}), $w$ is a weight parameter to control the balance between two pinball loss terms, and $z_{\infty}$ is any constant such that $z_{\infty} > z_{\max}$.  In Portnoy's estimator, we can set an arbitrary constant $0 \leq w \leq 1$ for the parameter $w$ if $\tau_{c} > \tau$, where $\tau_{c} = \mathrm{Pr}(t \leq c) = F(c)$, but we have to set $w = \mathrm{Pr}(F(c) < F(t) \leq \tau | t > c) = (\tau - \tau_{c})/(1 - \tau_{c})$ otherwise (i.e., $\tau_{c} \leq \tau$).  Since we do not know the true value $\tau_{c} = F(c)$, we have to resolve this problem to use this estimator.  
Before showing how to resolve this problem, we prove that this estimator is proper under the condition that $w$ is correct.  Note that this is the first result for the quantile regression-based survival analysis in terms of the theory of scoring rules.

\begin{theorem}\label{theorem:portnoy}
Portnoy's estimator is proper under the condition that $w$ is correct.
\end{theorem}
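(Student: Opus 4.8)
The plan is to exploit that $S_{\rm Portnoy}(\hat F,(z,\delta);w,\tau)$ depends on the estimate $\hat F$ only through the single number $q:=\hat F^{-1}(\tau)$, so it suffices to analyse
\[ g(q)\;:=\;\mathop{\mathbb{E}}_{(t,c)\sim(T,C)}\bigl[S_{\rm Portnoy}(\hat F,(z,\delta);w,\tau)\bigr] \]
as a function of the scalar $q$ and to show that it is convex with a global minimum at $q^{\star}:=F^{-1}(\tau)$; since $\mathbb{E}[S_{\rm Portnoy}(F,\cdot)]=g(q^{\star})$, this is exactly properness in the sense of Definition~\ref{def:survival_proper}. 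Convexity is the easy part: for fixed $y$ one has $\rho_{\tau}(q,y)=\max\{(1-\tau)(q-y),\ \tau(y-q)\}$, a maximum of two affine functions of $q$ and hence convex, and each branch of Eq.~(\ref{eq:Portnoy}) is a nonnegative combination of such terms in which the weight $w$ depends on $c$ only (through $F(c)$) and never on $t$; convexity survives taking the expectation, which is finite since $0<z\le z_{\max}$ and $z_{\infty}$ is a fixed constant make $S_{\rm Portnoy}$ bounded.

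The substantive step is to show $g'(q^{\star})=0$; with convexity this forces $q^{\star}$ to be a global minimiser. Using the a.e.\ identity $\partial_{q}\rho_{\tau}(q,y)=(1-\tau)\mathbbm{1}(q>y)-\tau\mathbbm{1}(q<y)$, together with $\partial_{q}\rho_{\tau}(q,z_{\infty})=-\tau$ since $q<z_{\infty}$, differentiation under the expectation (legitimate because the difference quotients in $q$ are uniformly bounded) gives
\[ g'(q)=\mathop{\mathbb{E}}_{(t,c)}\bigl[\mathbbm{1}(t\le c)\,\partial_{q}\rho_{\tau}(q,t)\bigr]+\mathop{\mathbb{E}}_{(t,c)}\bigl[\mathbbm{1}(t>c)\,w\,\partial_{q}\rho_{\tau}(q,c)\bigr]-\tau\mathop{\mathbb{E}}_{(t,c)}\bigl[\mathbbm{1}(t>c)(1-w)\bigr]. \]
I would then condition on $c$ and invoke Assumption~\ref{assumption:independence} ($T\indep C$), which replaces the inner conditional expectations by expressions in $F(c)$, $F(\min(q,c))$ and $F(q)$; at $q=q^{\star}$ we have $F(q^{\star})=\tau$, so the conditional-on-$c$ value of $g'$ collapses to an algebraic expression in $F(c)$, $\tau$ and $w$. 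Splitting on the sign of $F(c)-\tau$ (equivalently $c<q^{\star}$ or $c>q^{\star}$) and substituting the prescribed $w$ finishes it: when $F(c)\le\tau$ one plugs in $w=(\tau-F(c))/(1-F(c))$ and the $F(c)$-terms cancel to $0$; when $F(c)>\tau$, $w$ is unconstrained but enters only through $w+(1-w)$, so it cancels regardless of its value. Integrating over $c$ yields $g'(q^{\star})=0$.

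I expect the only real work to be this last case analysis, which amounts to carefully tracking which affine branch of each $\rho_{\tau}$ is active — the comparisons $q^{\star}$ versus $c$, and inside the uncensored term $q^{\star}$ versus $t$ — in each of the two regimes, and checking the cancellations. The cancellation is not accidental: the prescribed $w$ is exactly the value making the two-point distribution placing mass $w$ at $c$ and mass $1-w$ at $z_{\infty}$ agree with the true conditional law of $T$ given $T>c$ on the event $\{T\le q^{\star}\}$, which is the only feature of that law seen by the first-order condition for the $\tau$-quantile; this also explains why an arbitrary $w$ is harmless once $F(c)>\tau$, since then $\{T\le q^{\star}\}$ has conditional probability $0$ given $T>c$ and there is no constraint to meet. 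As a closing remark, the argument delivers properness but, as one should expect of a single-quantile rule, not strict properness, since every $\hat F$ with $\hat F^{-1}(\tau)=F^{-1}(\tau)$ attains the minimum of $g$.
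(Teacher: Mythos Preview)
Your argument is correct and takes a genuinely different route from the paper. The paper fixes $c$ and proves the conditional inequality
\[
\mathop{\mathbb{E}}_{t\sim T\mid C=c}\bigl[S_{\rm Portnoy}(\hat F,(z,\delta);w,\tau)\bigr]\;\ge\;\mathop{\mathbb{E}}_{t\sim T\mid C=c}\bigl[S_{\rm Portnoy}(F,(z,\delta);w,\tau)\bigr]
\]
by brute-force case analysis on the relative order of $c$, $F^{-1}(\tau)$ and $\hat F^{-1}(\tau)$ (four cases), in each one expanding the pinball losses into their affine pieces, taking conditional expectations, and checking the inequality directly; marginalising over $C$ then finishes.

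Your convexity-plus-first-order-condition approach is more structural: because $q\mapsto\rho_{\tau}(q,\cdot)$ is convex and the weights do not depend on $q$, convexity of $g$ is automatic, and you never have to track $\hat F^{-1}(\tau)$ at all---only the two regimes $c\lessgtr q^{\star}$ survive. This halves the casework and makes transparent \emph{why} the prescribed $w$ is the right one (it matches the conditional probability $\Pr(T\le q^{\star}\mid T>c)$, the only feature of the post-censoring law that the $\tau$-quantile first-order condition sees). The paper's proof, by contrast, verifies the inequality for every $\hat F$ simultaneously and so implicitly re-derives convexity inside each case; it is longer but entirely elementary, needing no interchange of limit and expectation. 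Your closing remark that the result is proper but not strictly proper is also correct and worth keeping.
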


\begin{proof}
We give a proof in Appendix~\ref{sec:proof_Portnoy}.
\end{proof}

This theorem means that the crucial part of Portnoy's estimator is to set an appropriate value for $w$, and this theorem ensures that we can recover the true probability distribution $F$ by minimizing Eq.~(\ref{eq:Portnoy}) if $w$ is correct.

Now, we emphasize that we cannot avoid the dependence on unknown parameters such as $F(c)$ in the definition of {\em any} of the scoring rules for survival analysis due to the discretization of $\hat{F}$.  In the case of Portnoy's estimator, even if we know the true value $F^{-1}(\tau_{i})$ for all $\{ \tau_{i} \}_{i=0}^{B}$, we cannot compute $F(c)$ because $c$ is not always contained in $\{ F^{-1}(\tau_{i}) \}_{i=0}^{B}$.  The best we can do is to find quantile levels $\tau_{i}$ and $\tau_{i+1}$ such that $F^{-1}(\tau_{i}) < c \leq F^{-1}(\tau_{i+1})$ by using the assumption that $F$ is a monotonically increasing function.  This means that $F(c)$ is between $\tau_{i}$ and $\tau_{i+1}$.  Even if we could find such $\tau_{i}$ and $\tau_{i+1}$, we would not be able to calculate some important probabilities such as $\mathrm{Pr}(c < t \leq F^{-1}(\tau_{i+1})) = \tau_{i+1} - F(c)$.   Therefore, we usually mitigate this problem by using a large $B$, which enables us to assume, for example, $F^{-1}(\tau_{i+1}) - F^{-1}(\tau_{i}) \approx 0$ for all $i$.

Even if we use a large $B$ to assume that we can find the quantile level $\tau'_{c}$ such that $c \approx F^{-1}(\tau'_{c})$ for any $c$, the problem that we do not know the true $F^{-1}$ remains.  One of the approaches to tackling this problem is the grid search algorithm~\cite{Por03,NBP06}.  In this algorithm, we use a sufficiently large $B$, and we estimate $\hat{F}^{-1}(\tau_{i})$ of $F^{-1}(\tau_{i})$ in the increasing order of $i=0,1,\ldots,B$.   Suppose that we have estimated $\{ \hat{F}^{-1}(\tau_{i}) \}_{i=0}^{j-1}$ and we are going to estimate $\hat{F}^{-1}(\tau_{j})$.  The key idea of this algorithm is that we can find $\tau'_{c} \in \{ \tau_{i} \}_{i=0}^{j-1}$ such that $c \approx \hat{F}^{-1}(\tau'_{c})$ if $\tau_{c} = F(c) < \tau_{j}$.   If we can find such $\tau'_{c}$, we estimate $w$ by using $\tau'_{c} \approx \tau_{c}$.  If we cannot find such $\tau'_{c}$, this algorithm assumes that $\tau_{c} > \tau_{j}$ and we use an arbitrary constant $0 \leq w \leq 1$.  Portnoy~\yrcite{Por03} discusses that this algorithm is analogous to the Kaplan-Meier estimator~\cite{KM58}, and their theoretical analysis~\cite{Por03,NBP06} proves that the estimation model combining Portnoy's estimator, linear regression, and the grid search algorithm can recover the true probability distribution $F$ if there is a sufficient number of data points.

As for another approach, Pearce et al.~\yrcite{PJJZ22} propose the CQRNN algorithm, which we call an {\em iterative reweighting (IR)} algorithm.  Unlike the grid search algorithm, this algorithm estimates $\{ \hat{F}^{-1}(\tau_{i}) \}_{i=0}^{B}$ simultaneously by using a neural network.  This algorithm starts with an arbitrary initial estimation $\hat{F}$, and it estimates $\hat{w}$ of the true $w$ by using $\hat{F}$.  Then, it updates $\hat{F}$ by using $\hat{w}$, and it repeats this iterative procedure of estimating $\hat{F}$ and $\hat{w}$ until these values converge.  This IR algorithm is similar to the expectation-maximization (EM) algorithm, and the relationship between this algorithm and the EM algorithm is discussed in~\cite{PJJZ22}.  Note that this IR algorithm can be implemented for ``free'' according to~\cite{PJJZ22}, which means that we can implement it easily in the computation of the loss function of a neural network training algorithm, and we do not need to construct two separate neural network models for estimating $\hat{F}$ and $\hat{w}$.  The experimental evaluation in~\cite{PJJZ22} shows that the IR algorithm performs the best among the quantile regression-based survival analysis models.

\subsection{Extension of Logarithmic Score}\label{sec:logarithmic}

While we estimate $\{ \hat{F}^{-1}_{Y}(\tau_{i}) \}_{i=0}^{B}$ in quantile regression, we consider distribution regression, in which we estimate $\{ \hat{F}_{Y}(\zeta_{i}) \}_{i=0}^{B}$.  For distribution regression, the logarithmic score~\cite{Good52} is known as a strictly proper scoring rule, and it is defined as
\begin{eqnarray}
\lefteqn{S_{\rm Log}(\hat{F}_{Y}, y; \{ \zeta_{i} \}_{i=0}^{B})} \nonumber \\
& = & - \sum_{i=0}^{B-1} \mathbbm{1}(\zeta_{i} < y \leq \zeta_{i+1}) \log(\hat{F}_{Y}(\zeta_{i+1}) - \hat{F}_{Y}(\zeta_{i})) \nonumber \\
& = & - \sum_{i=0}^{B-1} \mathbbm{1}(\zeta_{i} < y \leq \zeta_{i+1}) \log \hat{f}_{i},  \label{eq:logarithmic}
\end{eqnarray}
where $\hat{f}_{i} = \hat{F}_{Y}(\zeta_{i+1}) - \hat{F}_{Y}(\zeta_{i})$ for $i=0,1,\ldots,B-1$.

We extend this logarithmic score for distribution regression-based survival analysis as
\begin{eqnarray}
\lefteqn{S_{\rm Cen-log}(\hat{F}, (z, \delta); \{ w_{i} \}_{i=0}^{B-1}, \{ \zeta_{i} \}_{i=0}^{B})} \nonumber \\
& = & - \sum_{i=0}^{B-1} \mathbbm{1}(\zeta_{i} < z \leq \zeta_{i+1}) g(i,\delta, w_{i}),  \label{eq:logarithmic_censored}
\end{eqnarray}
where
\begin{eqnarray*}
\lefteqn{g(i,\delta, w_{i})} \\
& = & 
\begin{cases}
\log \hat{f}_{i} & \ \mbox{if} \ \delta=1, \\
w_{i} \log \hat{f}_{i} + (1-w_{i}) \log (1 - \hat{F}(\zeta_{i+1})) & \ \mbox{if} \ \delta=0,
\end{cases}
\end{eqnarray*}
$\hat{f}_{i} = \hat{F}(\zeta_{i+1}) - \hat{F}(\zeta_{i})$, and $w_{i} = \mathrm{Pr}(c < t \leq \zeta_{i+1} | t > c) = (F(\zeta_{i+1}) - F(c))/(1 - F(c))$.  Note that this scoring rule is equivalent to Eq.~(\ref{eq:logarithmic}) if $\delta=1$.  Similar to Portnoy's estimator, we cannot set the parameter $w_{i}$ of this scoring rule because we do not know $F(\zeta_{i+1})$ and $F(c)$.

Even though we do not know the correct $\{ w_{i} \}_{i=0}^{B-1}$, we prove that this scoring rule is proper if the set of parameters $\{ w_{i} \}_{i=0}^{B-1}$ is correct.

\begin{theorem}\label{theorem:logarithmic}
$S_{\rm Cen-Log}(\hat{F}, (z, \delta); \{ w_{i} \}_{i=0}^{B-1}, \{ \zeta_{i} \}_{i=0}^{B})$ is a proper scoring rule under the condition that $w_{i}$ is correct for all $i$.
\end{theorem}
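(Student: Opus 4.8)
The plan is to evaluate $\mathbb{E}_{(t,c)\sim(T,C)}[S_{\rm Cen-log}(\hat{F},(z,\delta))]$ in closed form, show that after a change of variables it decouples into independent one-dimensional problems, and then apply the minimality of cross-entropy, in the spirit of the classical proof for the logarithmic score. Write $\hat{S}_i = 1-\hat{F}(\zeta_i)$, so that $\hat{f}_i = \hat{S}_i - \hat{S}_{i+1}$ with $\hat{S}_0 = 1$ and $\hat{S}_B = 0$. Because exactly one indicator $\mathbbm{1}(\zeta_i < z \le \zeta_{i+1})$ is nonzero, the expected score is a linear combination
\[ \mathbb{E}[S_{\rm Cen-log}] \;=\; -\sum_{i=0}^{B-1} a_i \log \hat{f}_i \;-\; \sum_{i=0}^{B-1} b_i \log \hat{S}_{i+1}, \]
where $a_i$ is the total weight placed on $\log\hat{f}_i$ and $b_i$ the total weight on $\log\hat{S}_{i+1}$. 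To compute them I would split on $\delta$ and condition on $T=t$ or $C=c$; by Assumption~\ref{assumption:independence}, $\mathbb{E}[\mathbbm{1}(t\le c)\mid T=t]=1-G(t)$ and $\mathbb{E}[\mathbbm{1}(t>c)\mid C=c]=1-F(c)$, where $G$ is the CDF of $C$. Feeding in the \emph{correct} weight $w_i = (F(\zeta_{i+1})-F(c))/(1-F(c))$ makes the factor $1-F(c)$ cancel, and a short probabilistic reading of the resulting integrals (the uncensored part of $a_i$ equals $\mathrm{Pr}(\zeta_i<T\le\zeta_{i+1},\,T<C)$ and the censored part equals $\mathrm{Pr}(\zeta_i<C<T\le\zeta_{i+1})$) collapses them to $a_i = f_i\,(1-G(\zeta_i))$ and $b_i = (1-F(\zeta_{i+1}))\,(G(\zeta_{i+1})-G(\zeta_i))$, with $f_i=F(\zeta_{i+1})-F(\zeta_i)$; in particular $a_i+b_i = \mathrm{Pr}(\zeta_i<\min\{T,C\}\le\zeta_{i+1})$.

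Next I would pass to discrete-hazard coordinates $\hat{h}_i = \hat{f}_i/\hat{S}_i$, so that $\log\hat{f}_i = \log\hat{h}_i + \sum_{l<i}\log(1-\hat{h}_l)$ and $\log\hat{S}_{i+1} = \sum_{l\le i}\log(1-\hat{h}_l)$. Substituting, collecting the coefficients of $\log\hat{h}_l$ and of $\log(1-\hat{h}_l)$, and using the telescoped identity $\sum_{k\ge l}(a_k+b_k) = \mathrm{Pr}(\min\{T,C\}>\zeta_l) = (1-F(\zeta_l))(1-G(\zeta_l))$ (here the standing assumption $\min\{T,C\}\le z_{\max}$ kills the tail term), the expected score becomes
\[ \mathbb{E}[S_{\rm Cen-log}] \;=\; \sum_{l=0}^{B-1}(1-G(\zeta_l))\Big[-\,f_l\log\hat{h}_l \;-\; (1-F(\zeta_{l+1}))\log(1-\hat{h}_l)\Big]. \]
Now the $l$-th summand depends on a single free variable $\hat{h}_l\in(0,1)$, and for fixed $p=f_l>0$, $q=1-F(\zeta_{l+1})\ge 0$ the map $x\mapsto -p\log x - q\log(1-x)$ is convex on $(0,1)$ with minimizer $x = p/(p+q)$. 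Since $p+q = 1-F(\zeta_l)$, the minimizer is $\hat{h}_l = f_l/(1-F(\zeta_l))$, i.e.\ the true discrete hazard of $T$ on the $l$-th interval; as the weights $1-G(\zeta_l)$ are nonnegative, summing gives $\mathbb{E}[S_{\rm Cen-log}(\hat{F},\cdot)] \ge \mathbb{E}[S_{\rm Cen-log}(F,\cdot)]$ for every admissible $\hat{F}$, with the hazards of $F$ attaining the minimum — which is exactly properness. The boundary term $l=B-1$ only needs $F(z_{\max})=1$, so that $q=0$ there and the forced value $\hat{h}_{B-1}=1$ (equivalently $\hat{F}(z_{\max})=1$) is consistent with the optimum.

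The main obstacle is Steps above that produce the two displays: first, one must verify carefully that with the correct $w_i$ the $c$-dependence inside the censored term integrates against $G$ to give the clean coefficients — this is precisely where the specific form of $w_i$ and Assumption~\ref{assumption:independence} are used, and an incorrect $w_i$ would spoil the cancellation and hence properness; second, $\hat{S}_{i+1}$ is not an independent parameter but the tail sum $\sum_{j>i}\hat{f}_j$, so the $\hat{f}_i$ are coupled through the log-survival terms, and the hazard reparametrization is the device that undoes this coupling and reduces the problem to $B-1$ independent cross-entropy minimizations. Once the second display is established, the remaining argument is the standard Gibbs-inequality reasoning already used to prove properness of the ordinary logarithmic score.
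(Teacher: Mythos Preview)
Your argument is correct and takes a genuinely different route from the paper. The paper conditions on $C=c$, shows that the conditional expected score equals
\[
-\sum_{j\le i} f_j\log\hat f_j-(1-F(\zeta_{i+1}))\log(1-\hat F(\zeta_{i+1}))
\]
(where $i$ is the bin containing $c$), recognizes the difference with the true $F$ as a Kullback--Leibler divergence on $i{+}2$ atoms, and only then marginalizes over $C$. You instead integrate over both $T$ and $C$ from the start, obtaining explicit coefficients $a_i=f_i(1-G(\zeta_i))$ and $b_i=(1-F(\zeta_{i+1}))(G(\zeta_{i+1})-G(\zeta_i))$ that depend on the censoring CDF $G$, and then decouple via the discrete-hazard reparametrization $\hat h_l=\hat f_l/\hat S_l$ into $B$ independent convex problems. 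The paper's route is shorter and avoids $G$ altogether, which is arguably cleaner for merely establishing properness. Your route is more work but buys additional structure: the second display exhibits the expected score as a $G$-weighted sum of binary cross-entropies on the interval hazards, making transparent both why the true hazards are optimal and exactly how the censoring distribution enters (only through the nonnegative weights $1-G(\zeta_l)$). Two small remarks: your ``$B{-}1$ independent minimizations'' should read $B$, with the last one degenerate as you note; and the uncensored event should be $T\le C$ rather than $T<C$, though under the paper's continuity assumptions this is immaterial.
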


\begin{proof}
We give a proof in Appendix~\ref{sec:variant_logarithmic}.
\end{proof}

Similar to Portnoy's estimator, we can use both the grid-search algorithm and the IR algorithm to estimate $\{ w_{i} \}_{i=0}^{B-1}$.

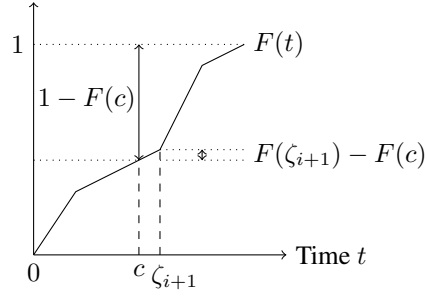
\begin{figure}[t]
\centering
\begin{tikzpicture}[scale=2.8]
\draw (0,0)--(0.2,0.3)--(0.6,0.5)--(0.8,0.9)--(1.0,1.0);
\draw[dotted] (0,1)--(1,1);
\draw[dotted] (0,0.45)--(0.5,0.45);
\draw[<->] (0.5,0.45) -- (0.5,1);
\node at (0.53,0.74) [left] {$1-F(c)$};
\draw[dotted] (0.6,0.45)--(1,0.45);
\draw[dotted] (0.6,0.5)--(1,0.5);
\draw[<->] (0.8,0.45) -- (0.8,0.5);
\node at (1.0,0.475) [right] {$F(\zeta_{i+1})-F(c)$};
\draw[dashed] (0.5,0)--(0.5,0.45);
\draw[dashed] (0.6,0)--(0.6,0.5);
\node at (0,0) [below] {$0$}; 
\node at (0.5,0) [below] {$c$};
\node at (0.67,0) [below] {$\zeta_{i+1}$};
\draw [->] (0,0)--(0,1.2);
\draw [->] (0,0)--(1.2,0);
\node at (0,1.0) [left] {$1$};
\node at (1.0,1.0) [right] {$F(t)$};
\node at (1.2,0) [right] {Time $t$};
\end{tikzpicture}
\caption{Illustration of computation of weight $w_{i}=(F(\zeta_{i+1})-F(c))/(1-F(c))$ for scoring rule $S_{\rm Cen-log}$.}
\label{fig:weight_w_logarithm}
\end{figure}

In addition, we show another simpler approach by assuming that $w_{i} \approx 0$ for all $i$ if $B$ is large.  If $B$ is large, $1-F(c)$ is usually much larger than $F(\zeta_{i+1}) - F(c)$ (see Fig.~\ref{fig:weight_w_logarithm}), and hence, we have $w_{i} = (F(\zeta_{i+1}) - F(c))/(1-F(c)) \approx 0$.  Therefore, we can obtain a simpler variant of $S_{\rm Cen-log}$ by setting $w_{i} = 0$ for all $i$:
\begin{eqnarray}
\lefteqn{S_{\rm Cen-log-simple}(\hat{F}, (z, \delta); \{ \zeta_{i} \}_{i=0}^{B})} \nonumber \\
& = & -  \sum_{i=0}^{B-1} \mathbbm{1}(\zeta_{i} < z \leq \zeta_{i+1}) g(i,\delta, 0) \label{eq:logarithmic_simple} \\ 
& = & - \delta \sum_{i=0}^{B-1} \mathbbm{1}(\zeta_{i} < z \leq \zeta_{i+1}) \log \hat{f}_{i} \nonumber \\
& & - (1-\delta) \sum_{i=0}^{B-1} \mathbbm{1}(\zeta_{i} < z \leq \zeta_{i+1}) \log (1 - \hat{F}(\zeta_{i+1})) \nonumber. 
\end{eqnarray}
Furthermore, by increasing $B$ to infinity (i.e., $B \rightarrow \infty$), we obtain the continuous version of this scoring rule:
\begin{eqnarray}
\lefteqn{S_{\rm Cen-cont-log}(\hat{F}, (z, \delta))} \nonumber \\
& = & - \delta \log \frac{d\hat{F}}{dt}(z) - (1-\delta) \log (1 - \hat{F}(z)),  \label{eq:logarithmic_infty}
\end{eqnarray}
which is equal to the extension of the logarithmic score that is proven to be strictly proper in~\cite{RHSS22}.

\paragraph{Remarks.}
This simplification clarifies that the proof in~\cite{RHSS22} implicitly assumes that $B$ is sufficiently large.  This means that we should set $B$ large enough in practice.  Moreover, strictly speaking, the relation $w_{i} = (F(\zeta_{i+1}) - F(c))/(1-F(c)) \approx 0$ may not hold if $1 - F(c) \approx 0$.  Therefore, we recommend $S_{\rm Cen-log}$ (Eq.~(\ref{eq:logarithmic_censored})) rather than $S_{\rm Cen-log-simple}$ (Eq.~(\ref{eq:logarithmic_simple})) and $S_{\rm Cen-cont-log}$ (Eq.~(\ref{eq:logarithmic_infty})) whenever possible.

\subsection{Extension of Brier Score}\label{sec:Brier}

In distribution regression, the Brier score~\cite{Brier50} is also known as a strictly proper scoring rule, which is defined as
\begin{eqnarray}
\lefteqn{S_{\rm Brier}(\hat{F}_{Y}, y; \{ \zeta_{i} \}_{i=0}^{B})} \nonumber \\
& = & \sum_{i=0}^{B-1} (\mathbbm{1}(\zeta_{i} < y \leq \zeta_{i+1}) - \hat{f}_{i})^2,  \label{eq:Brier}
\end{eqnarray}
where $\hat{f}_{i} = \hat{F}_{Y}(\zeta_{i+1}) - \hat{F}_{Y}(\zeta_{i})$ for $i=0,1,\ldots,B-1$.

We extend this Brier score for distribution regression-based survival analysis as
\begin{eqnarray}
\lefteqn{S_{\rm Cen-Brier}(\hat{F}, (z, \delta); \{ w_{i} \}_{i=0}^{B-1}, \{ \zeta_{i} \}_{i=0}^{B})} \nonumber \\
& = & \sum_{i=0}^{B-1} \left( w_{i} (1-\hat{f}_{i})^{2} + (1 - w_{i}) \hat{f}_{i}^{2} \right),  \label{eq:cen-Brier}
\end{eqnarray}
where
\[
w_{i} =
\begin{cases}
0 & \mbox{if} \ \delta=1 \ \mbox{and} \ \zeta_{i+1} < z=t, \\
1 & \mbox{if} \ \delta=1 \ \mbox{and} \ \zeta_{i} < z = t \leq \zeta_{i+1}, \\
0 & \mbox{if} \ z \leq \zeta_{i}, \\
 \frac{F(\zeta_{i+1}) - F(c)}{1 - F(c)} & \mbox{if} \ \delta=0 \ \mbox{and} \ \zeta_{i} < z = c \leq \zeta_{i+1}, \\
 \frac{F(\zeta_{i+1}) - F(\zeta_{i})}{1-F(c)} & \mbox{if} \ \delta=0 \ \mbox{and} \ \zeta_{i+1} < z=c.
\end{cases}
\]
If $\delta=1$, it is easy to see that Eq.~(\ref{eq:cen-Brier}) is equivalent to Eq.~(\ref{eq:Brier}).

We prove that this scoring rule is proper if the set of parameters $\{ w_{i} \}_{i=0}^{B-1}$ is correct.

\begin{theorem}\label{theorem:Brier}
$S_{\rm Cen-Brier}(\hat{F}, (z, \delta); \{ w_{i} \}_{i=0}^{B-1}, \{ \zeta_{i} \}_{i=0}^{B})$ is a proper scoring rule under the condition that $w_{i}$ is correct for all $i$.
\end{theorem}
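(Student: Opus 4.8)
The plan is to follow the classical properness argument for the (uncensored) Brier score, after reducing the whole statement to a single identity about the expected weights. Since $\hat f_i = \hat F(\zeta_{i+1}) - \hat F(\zeta_i)$ does not depend on the drawn sample $(z,\delta)$, linearity of expectation gives
\[ \mathop{\mathbb{E}}_{(t,c)\sim(T,C)}[S_{\rm Cen-Brier}(\hat F,(z,\delta))] \;=\; \sum_{i=0}^{B-1}\Bigl( \bar w_i (1-\hat f_i)^2 + (1-\bar w_i)\,\hat f_i^2\Bigr), \]
where $\bar w_i := \mathbb{E}_{(t,c)}[w_i]$. Thus everything rests on computing $\bar w_i$, and the claim to prove is $\bar w_i = f_i$ with $f_i := F(\zeta_{i+1}) - F(\zeta_i)$ the true probability mass of the $i$-th bin.

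To prove $\bar w_i = f_i$, I would observe that $w_i$ is designed to be exactly the posterior probability — evaluated with the true $F$ — that $T \in (\zeta_i,\zeta_{i+1}]$ given the observation: for $\delta=1$ the observation determines $t$, so $w_i = \mathbbm{1}(\zeta_i < t \le \zeta_{i+1})$, whereas for $\delta=0$ the observation only reveals $t>c$, so $w_i = \Pr(\zeta_i < T \le \zeta_{i+1}\mid T>c)$ computed from $F$. Splitting the expectation according to $\delta$ and invoking Assumption~\ref{assumption:independence}, conditioning on $C=c$ leaves the law of $T$ unchanged; in particular $\Pr(\delta=0\mid C=c) = \Pr(T>c) = 1-F(c)$, which cancels the $1-F(c)$ in the denominator of the censored weights. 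A short application of the tower property then yields $\mathbb{E}[w_i\mathbbm{1}(\delta=1)] = \Pr(\zeta_i < T \le \zeta_{i+1},\,T\le C)$ and $\mathbb{E}[w_i\mathbbm{1}(\delta=0)] = \Pr(\zeta_i < T \le \zeta_{i+1},\,T>C)$, and summing gives $\bar w_i = \Pr(\zeta_i < T \le \zeta_{i+1}) = f_i$. (Equivalently one can make this concrete: write both contributions as Lebesgue--Stieltjes integrals against $dF$ and against $dG$, where $G$ is the CDF of $C$, and combine them using $G(t)+(1-G(t))\equiv 1$ after one integration by parts.)

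Finally, with $\bar w_i = f_i$, each summand equals $f_i(1-\hat f_i)^2 + (1-f_i)\hat f_i^2 = (\hat f_i - f_i)^2 + f_i - f_i^2$, so
\[ \mathop{\mathbb{E}}_{(t,c)\sim(T,C)}[S_{\rm Cen-Brier}(\hat F,(z,\delta))] \;=\; \sum_{i=0}^{B-1}(\hat f_i - f_i)^2 \;+\; \sum_{i=0}^{B-1}(f_i - f_i^2). \]
The second sum is independent of $\hat F$ and the first is nonnegative, vanishing when $\hat F = F$ (then $\hat f_i = f_i$ for all $i$); properness then follows from Definition~\ref{def:survival_proper}. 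The argument does not deliver strict properness: equality holds whenever $\hat f_i = f_i$ for every $i$, which only pins $\hat F$ down at the grid points $\{\zeta_i\}_{i=0}^{B}$.

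I expect the middle step — the identity $\bar w_i = f_i$ — to be the crux. The delicate points are: (i) checking that the case list defining $w_i$ really partitions the sample space for each fixed $i$ and that each piece is the correct posterior; (ii) the fact that $w_i$ itself depends on the unknown $F(c)$, so the cancellation of $1-F(c)$ genuinely needs the independence of $T$ and $C$ (Assumption~\ref{assumption:independence}) — under a weaker censoring assumption the conditional law of $T$ given $C=c$ would differ from $F$ and $\bar w_i \ne f_i$ in general; and (iii) the boundary behaviour near $z_{\max}$, where $1-F(c)\to 0$ makes the censored weights a $0/0$ form, which is harmless because $\Pr(T>c)\to 0$ there too (indeed, since $F(z_{\max})=1$, on $\{\delta=0\}$ we have $c<z_{\max}$ almost surely), but should be addressed explicitly.
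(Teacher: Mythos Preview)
Your proposal is correct and follows essentially the same route as the paper's proof: both reduce the expected Cen-Brier score to $\sum_i(\hat f_i - f_i)^2$ plus a constant by establishing that the (expected) weight on bin $i$ equals the true bin mass $f_i$. The only presentational difference is that the paper conditions on $C=c$ first and verifies the conditional expectation explicitly (which turns out to equal $\sum_i(\hat f_i^2 - 2 f_i \hat f_i + f_i)$ independently of $c$), whereas you marginalize directly via the tower property using your posterior-probability reading of $w_i$; the substance and the key cancellation $(1-F(c))\cdot w_i = \Pr(\zeta_i < T \le \zeta_{i+1},\,T>c)$ under Assumption~\ref{assumption:independence} are identical.
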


\begin{proof}
We give a proof in Appendix~\ref{sec:variant_Brier}.
\end{proof}

We can use the IR algorithm to estimate $w_{i}$.  However, unlike Portnoy's estimator and the extension of the logarithmic score, we cannot use the grid-search algorithm in this extension of the Brier score because we need to estimate $w_{i}$ for all $i=0,1,\ldots,B-1$.

Note that each $w_{i}$ in this scoring rule is close to zero if $B$ is large and $\delta=0$.  However, since $w_{i}$s are designed to satisfy $\sum_{i} w_{i} = 1$, we cannot use the approximation $w_{i} \approx 0$ for this scoring rule.

\subsection{Extension of Ranked Probability Score}\label{sec:RPS}

The ranked probability score (RPS) is also known as a strictly proper scoring rule in distribution regression (see, e.g.,~\cite{GR07}).  It is defined as
\begin{equation*}
S_{\rm RPS}(\hat{F}_{Y},y) = \sum_{i=1}^{B-1} S_{\rm Binary-Brier}(\hat{F}_{Y}, y; \zeta_{i}),
\end{equation*}
where $S_{\rm Binary-Brier}$ is the binary version of $S_{\rm Brier}$ (Eq.~(\ref{eq:Brier})) with a single threshold $\zeta$:
\begin{equation*}
S_{\rm Binary-Brier}(\hat{F}_{Y}, y; \zeta) = (\mathbbm{1}(y \leq \zeta) - \hat{F}_{Y}(\zeta))^2.
\end{equation*}

We extend this scoring rule for survival analysis:
\begin{eqnarray}
\lefteqn{S_{\rm Cen-RPS}(\hat{F}, (z, \delta);  \{ w_{i} \}_{i=1}^{B-1}, \{ \zeta_{i} \}_{i=1}^{B-1})} \nonumber \\
& = & \sum_{i=1}^{B-1} S_{\rm Cen-Binary-Brier}(\hat{F}, (z, \delta); w_{i}, \zeta_{i}),  \label{eq:censored_RPS}
\end{eqnarray}
where $S_{\rm Cen-Binary-Brier}$ is the binary version of $S_{\rm Cen-Brier}$ (Eq.~(\ref{eq:cen-Brier})) with a single threshold $\zeta$:
\begin{eqnarray*}
\lefteqn{S_{\rm Cen-Binary-Brier}(\hat{F}, (z, \delta); w, \zeta)} \\
& = & 
\begin{cases}
 \hat{F}(\zeta)^2 & \mbox{if} \ z > \zeta, \\
 (1- \hat{F}(\zeta))^2 & \mbox{if} \ \delta=1 \ \mbox{and} \ z = t \leq \zeta, \\
 w (1- \hat{F}(\zeta))^2 \\
 \ + (1-w) \hat{F}(\zeta)^2 & \mbox{if} \ \delta=0 \ \mbox{and} \ z = c \leq \zeta,
\end{cases}
\end{eqnarray*}
where $w = (F(\zeta) - F(c))/(1 - F(c))$.

Since this scoring rule is just the sum of the binary version of Brier scores for survival analysis, it is straightforward to prove this theorem.

\begin{theorem}
$S_{\rm Cen-RPS}(\hat{F}, (z, \delta); \{ w_{i} \}_{i=1}^{B-1}, \{ \zeta_{i} \}_{i=1}^{B-1})$ is a proper scoring rule under the condition that $w_{i}$ is correct for all $i$.
\end{theorem}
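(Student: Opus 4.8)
The plan is to use the additive structure of $S_{\rm Cen-RPS}$. It is a finite sum of the binary scoring rules $S_{\rm Cen-Binary-Brier}(\hat F, (z,\delta); w_i, \zeta_i)$ over $i = 1,\ldots,B-1$, so by linearity of expectation it suffices to prove that each summand, taken with the correct weight $w = (F(\zeta) - F(c))/(1 - F(c))$ for its threshold $\zeta$, is itself proper in the sense of Definition~\ref{def:survival_proper}; adding the resulting $B-1$ inequalities then gives the theorem.

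First I would reduce the single-threshold problem to a one-dimensional computation. Fix $\zeta$ and abbreviate $p = \hat F(\zeta)$ and $q = F(\zeta)$. Conditioning on $C = c$ and invoking Assumption~\ref{assumption:independence} so that the conditional law of $T$ is still governed by $F$, I would split the branches of $S_{\rm Cen-Binary-Brier}$ according to where $t$ falls relative to $\zeta$ and $c$. If $c \le \zeta$: the event $\{t \le c\}$ (so $\delta=1$ and $z=t\le\zeta$) contributes $(1-p)^2$ with probability $F(c)$, and the event $\{t > c\}$ (so $\delta=0$ and $z=c\le\zeta$) contributes $w(1-p)^2 + (1-w)p^2$ with probability $1-F(c)$. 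If $c > \zeta$: the event $\{t \le \zeta\}$ contributes $(1-p)^2$ with probability $q$, and the event $\{t > \zeta\}$ (so $z = \min\{t,c\} > \zeta$ irrespective of $\delta$) contributes $p^2$ with probability $1-q$. Using that the correct weight satisfies $(1-F(c))\,w = q - F(c)$ and $(1-F(c))(1-w) = 1-q$, both cases collapse to the same value, independent of $c$: the conditional expectation of $S_{\rm Cen-Binary-Brier}(\hat F, (z,\delta); w, \zeta)$ given $C=c$ equals $q(1-p)^2 + (1-q)p^2$, and hence so does the unconditional expectation.

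The last step is elementary. The function $h(p) = q(1-p)^2 + (1-q)p^2$ is strictly convex on $[0,1]$ with $h'(p) = 2(p - q)$, so $h(p) \ge h(q)$ for every $p \in [0,1]$; this is exactly the inequality required of the binary summand (it is the expected Brier score of the Bernoulli$(q)$ label $\mathbbm{1}(T \le \zeta)$ against the prediction $p$). Summing over $i = 1,\ldots,B-1$ shows that $S_{\rm Cen-RPS}$ is proper; moreover the sum of expectations is minimized precisely when $\hat F(\zeta_i) = F(\zeta_i)$ for every $i$, which is the best one can say since $S_{\rm Cen-RPS}$ probes $\hat F$ only at the thresholds $\zeta_1, \ldots, \zeta_{B-1}$.

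The one point I expect to require care is the case analysis for the conditional expectation --- in particular checking that the weight $w_i$ enters only through the branch $\{\delta=0,\ z=c\le\zeta_i\}$, and that the cancellation $(1-F(c))\,w = F(\zeta) - F(c)$ together with $(1-F(c))(1-w) = 1 - F(\zeta)$ is precisely what renders the conditional expectation independent of the censoring distribution of $C$. Once that bookkeeping is done one need not invoke Theorem~\ref{theorem:Brier} at all, because the binary censored Brier score is self-contained and its properness reduces to the convexity of the Brier loss; equivalently, one can phrase the whole argument as: $S_{\rm Cen-Binary-Brier}$ is the censored Brier score for the binary "will the event occur by time $\zeta$?" problem, and $S_{\rm Cen-RPS}$ is the sum of these over the grid, exactly mirroring how $S_{\rm RPS}$ decomposes in the uncensored setting.
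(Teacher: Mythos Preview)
Your proposal is correct and follows the same approach the paper indicates: since $S_{\rm Cen-RPS}$ is a finite sum of the binary censored Brier scores $S_{\rm Cen-Binary-Brier}$, propriety follows by showing each summand is proper and applying linearity of expectation. The paper merely calls this ``straightforward'' without supplying details; your conditioning on $C=c$, the cancellation $(1-F(c))\,w = F(\zeta)-F(c)$ that makes the conditional expectation equal $q(1-p)^2 + (1-q)p^2$ independently of $c$, and the convexity of $h(p)$ are exactly the right way to fill in those details.
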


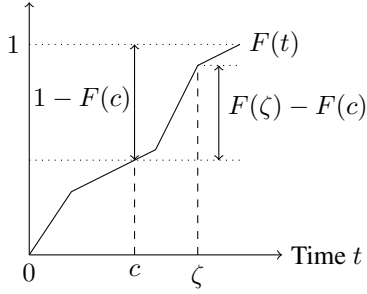
\begin{figure}[t]
\centering
\begin{tikzpicture}[scale=2.8]
\draw (0,0)--(0.2,0.3)--(0.6,0.5)--(0.8,0.9)--(1.0,1.0);
\draw[dotted] (0,1)--(1,1);
\draw[dotted] (0,0.45)--(1,0.45);
\draw[<->] (0.5,0.45) -- (0.5,1);
\node at (0.53,0.74) [left] {$1-F(c)$};
\draw[dotted] (0.8,0.9)--(1,0.9);
\draw[<->] (0.9,0.45) -- (0.9,0.9);
\node at (0.9,0.7) [right] {$F(\zeta)-F(c)$};
\draw[dashed] (0.5,0)--(0.5,0.45);
\draw[dashed] (0.8,0)--(0.8,0.9);
\node at (0,0) [below] {$0$}; 
\node at (0.5,0) [below] {$c$};
\node at (0.8,0) [below] {$\zeta$};
\draw [->] (0,0)--(0,1.2);
\draw [->] (0,0)--(1.2,0);
\node at (0,1.0) [left] {$1$};
\node at (1.0,1.0) [right] {$F(t)$};
\node at (1.2,0) [right] {Time $t$};
\end{tikzpicture}
\caption{Illustration of computations of weight $w_{i}=(F(\zeta)-F(c))/(1-F(c))$ for scoring rule $S_{\rm Cen-Binary-Brier}$.}
\label{fig:weight_w_binary_Brier}
\end{figure}

Note that the scoring rule $S_{\rm Cen-Binary-Brier}$ is analogous to Portnoy's estimator.  The scoring rule $S_{\rm Cen-Binary-Brier}$ is designed to estimate $\hat{F}(\zeta)$, where $\zeta$ is an input, and we use $F(c)$ and $\zeta$ to set $w$, whereas Portnoy's estimator is designed to estimate $\hat{F}^{-1}(\tau)$, where $\tau$ is an input, and we use $F(c)$ and $\tau$ to set $w$.  As these two scoring rules are similar, we can use both the grid-search algorithm and the IR algorithm for $S_{\rm Cen-RPS}$.

Unlike $S_{\rm Cen-log}$ defined in Eq.~(\ref{eq:logarithmic_censored}), the parameter $w$ of the scoring rule $S_{\rm Cen-Binary-Brier}$ is usually not close to zero because $\zeta$ and $c$ are usually not close to each other as shown in Fig.~\ref{fig:weight_w_binary_Brier}.  We note that the parameter $w$ of Portnoy's estimator is also not close to zero for a similar reason.

\section{Evaluation Metrics for Survival Analysis}\label{sec:metric}

While we have discussed the extensions of the scoring rules as loss functions, we should use strictly proper scoring rules also for evaluation metrics.  However, among the extensions of the scoring rules for survival analysis, we can use only $S_{\rm Cen-log-simple}$ (Eq.~(\ref{eq:logarithmic_simple})) as an evaluation metric because the other scoring rules depend on the parameter $w$ or $\{ w_{i} \}_{i=0}^{B-1}$.   Note that we can use $S_{\rm Cen-log-simple}$ only when $B$ is sufficiently large.   In Appendix~\ref{sec:experiments_appendix}, we conducted experiments on choosing an appropriate $B$, and the results suggested using $B > 16$.

While we can use $S_{\rm Cen-log-simple}$ as a discrimination metric for survival analysis, we note that there is a calibration metric, D-calibration~\cite{HHDG20}, for survival analysis.  D-calibration is widely used in survival analysis, but we propose another calibration metric, {\em KM-calibration}.  Let $\kappa(t)$ be the survival function estimated by the Kaplan-Meier estimator~\cite{KM58}.  This function $\kappa(t)$ represents the survival rate (i.e., the probability that the event time is less than $t$) over the entire dataset rather than individual feature vector $x$.  By definition, $\kappa(0)=1$ and $\kappa(t)$ is a monotonically decreasing function.  Assuming that $\kappa(t)$ is correct, $\kappa(t) = 1 - \hat{F}_{\rm avg}(t)$ must hold, where $\hat{F}_{\rm avg}(t)$ is the average of $\hat{F}(t)$ over all data points in the test dataset.   Therefore, we define our KM-calibration as the Kullback-Leibler divergence between $\kappa(t)$ and $1 - \hat{F}_{\rm avg}(t)$:
\begin{eqnarray*}
d_{\mathrm{KM-cal}}(\kappa, \hat{F}_{\rm avg})
& = & d_{\mathrm{KL}}(\kappa || 1-\hat{F}_{\rm avg}) \\
& = & \sum_{i=0}^{B-1} (p_{i} \log p_{i} - p_{i} \log q_{i}),
\end{eqnarray*}
where $p_{i} = \kappa(\zeta_{i+1}) - \kappa(\zeta_{i})$, $q_{i} = (1 - \hat{F}_{\rm avg}(\zeta_{i+1})) - (1 - \hat{F}_{\rm avg}(\zeta_{i}))$, and we assume here that $\kappa(\zeta_{B}) = 0$.  This metric is based on the observation that the model's predicted number of events within any time interval should be similar to the observed number~\cite{GHPPR20}.  We note that there is another calibration metric~\cite{CTL20} based on the Kaplan-Meier estimator.  Whereas this calibration metric uses the absolute difference, our KM-calibration uses the Kullback-Leibler divergence.

\section{Experiments}

\begin{figure*}
\centering
  \subfigure[$S_{\rm Cen-log-simple}$ on flchain]{
    \includegraphics[width=0.31\textwidth]{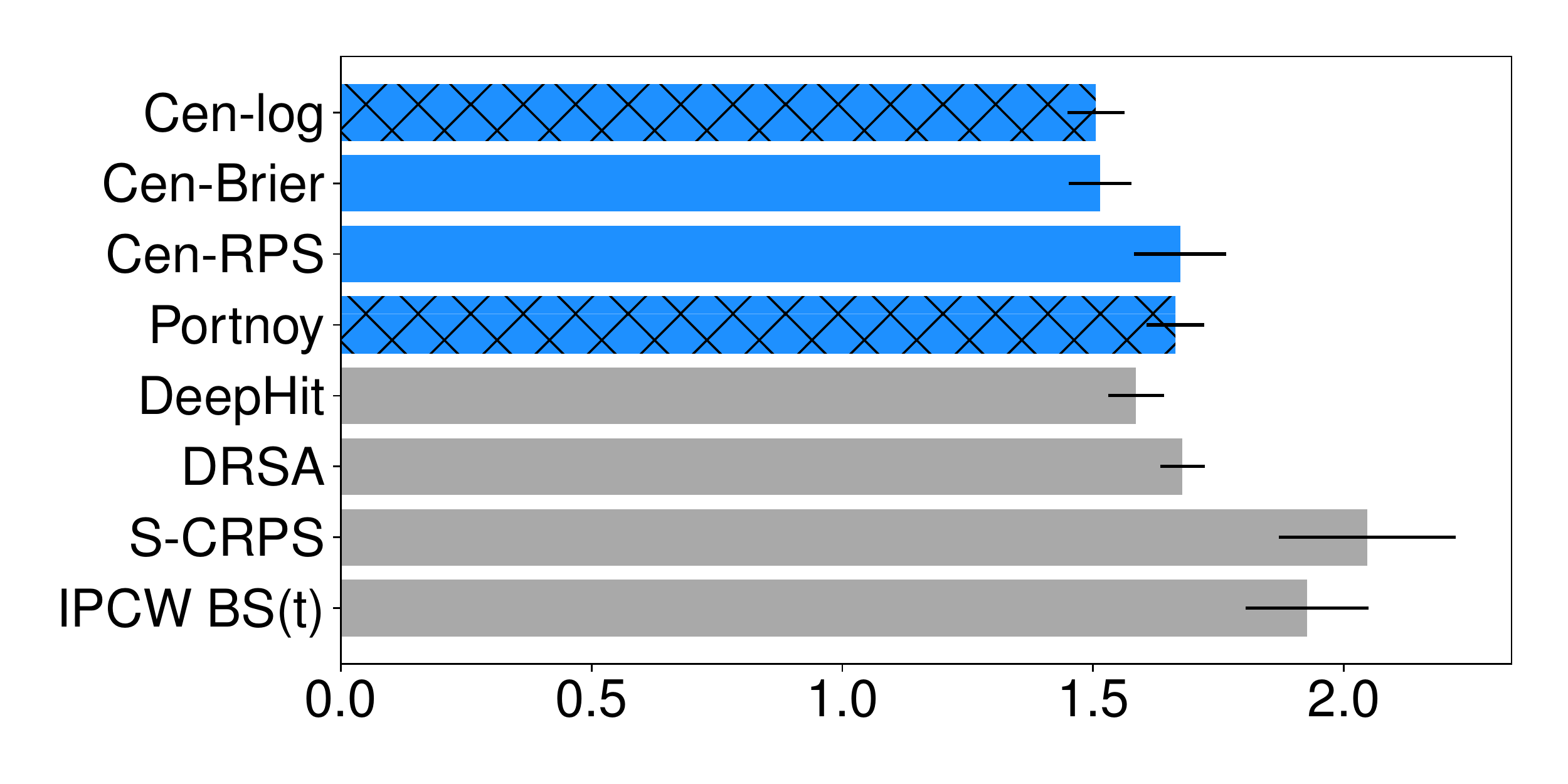}
  }\label{fig:flchain_logarithmic}
  \subfigure[$S_{\rm Cen-log-simple}$ on prostateSurvival]{
    \includegraphics[width=0.31\textwidth]{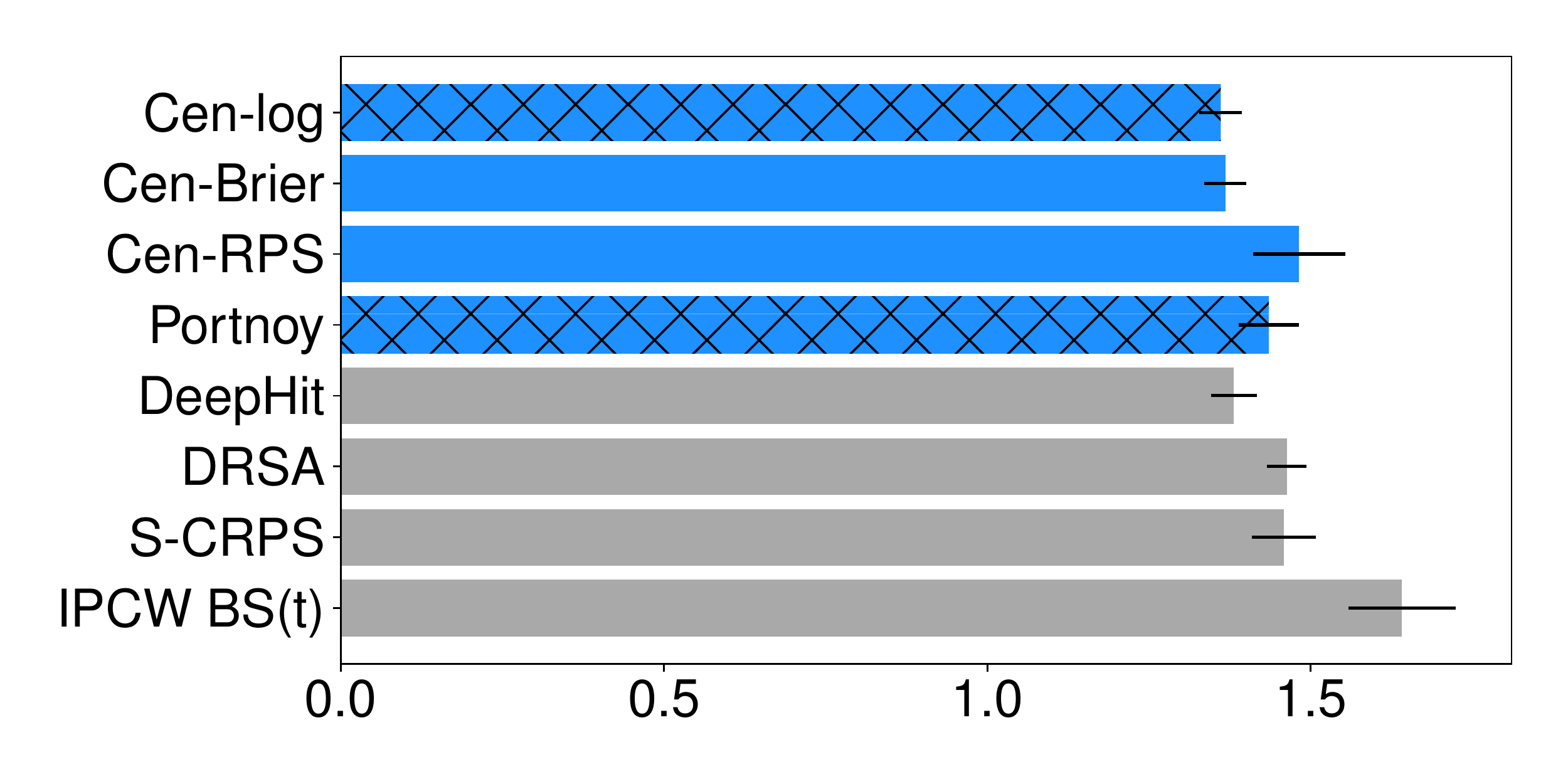}
  }\label{fig:prostateSurvival_logarithmic}
  \subfigure[$S_{\rm Cen-log-simple}$ on support]{
    \includegraphics[width=0.31\textwidth]{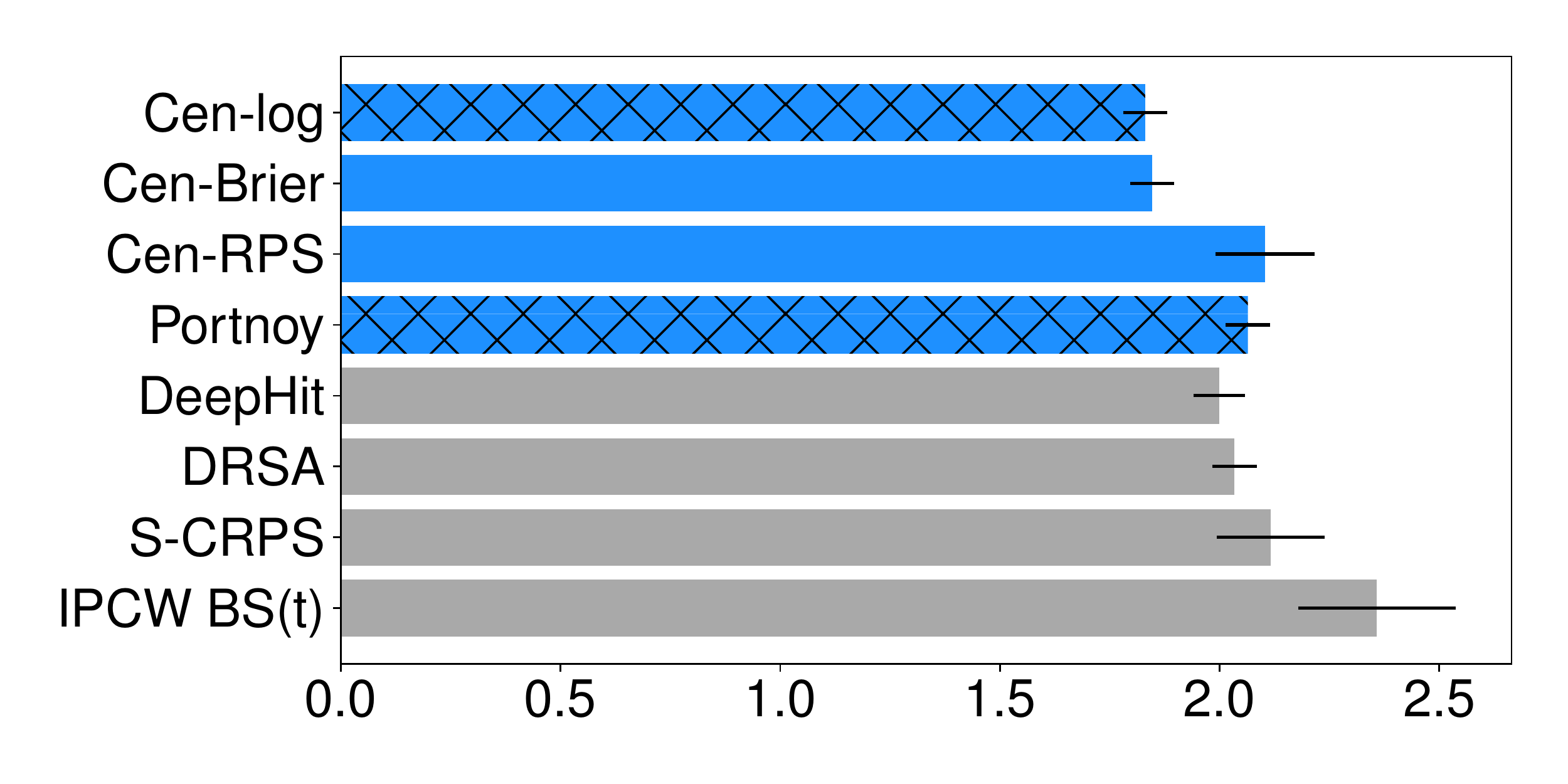}
  }\label{fig:support_logarithmic}
  \subfigure[D-calibration on flchain]{
    \includegraphics[width=0.31\textwidth]{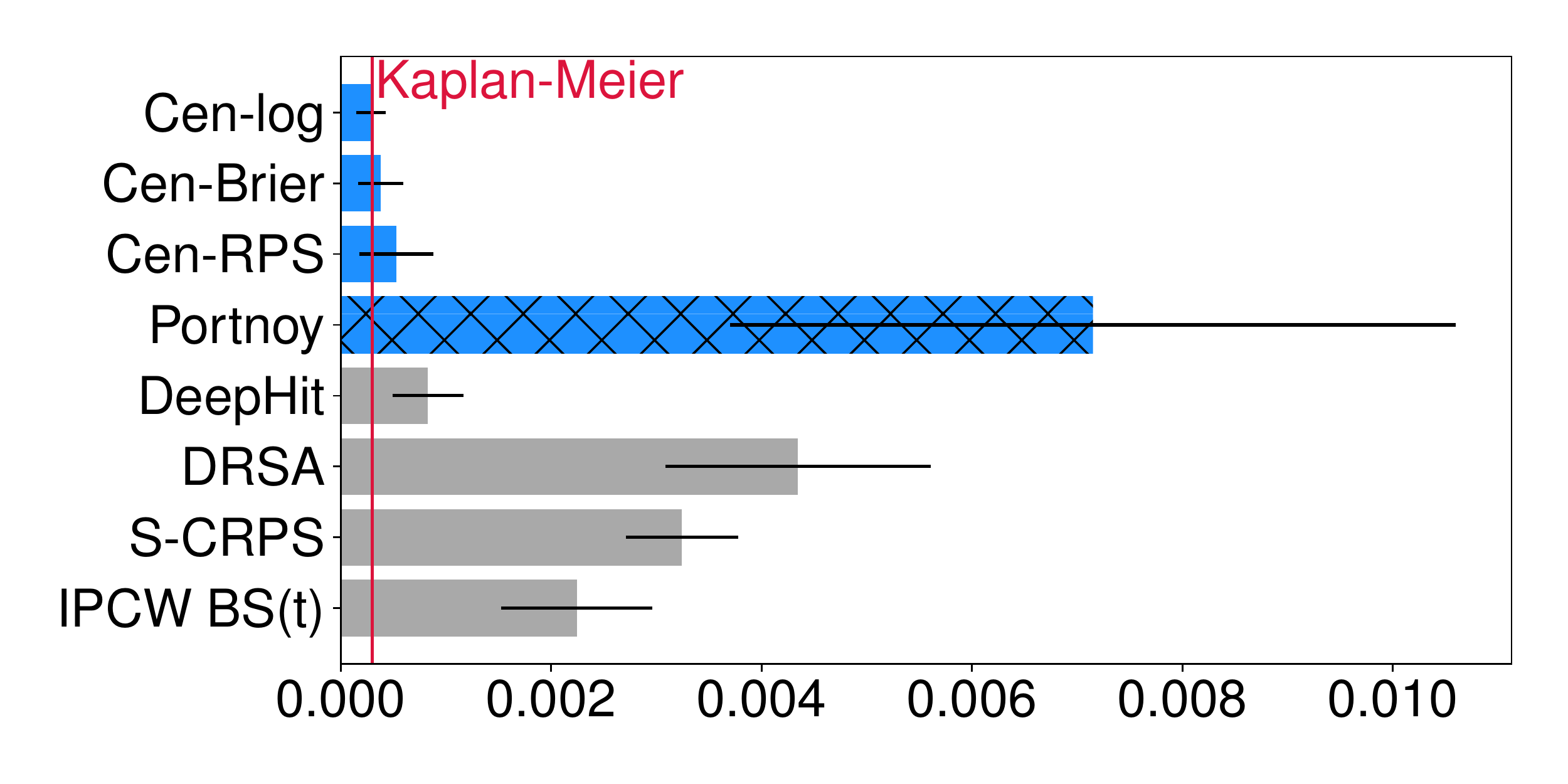}
  }\label{fig:flchain_D}
  \subfigure[D-calibration on prostateSurvival]{
    \includegraphics[width=0.31\textwidth]{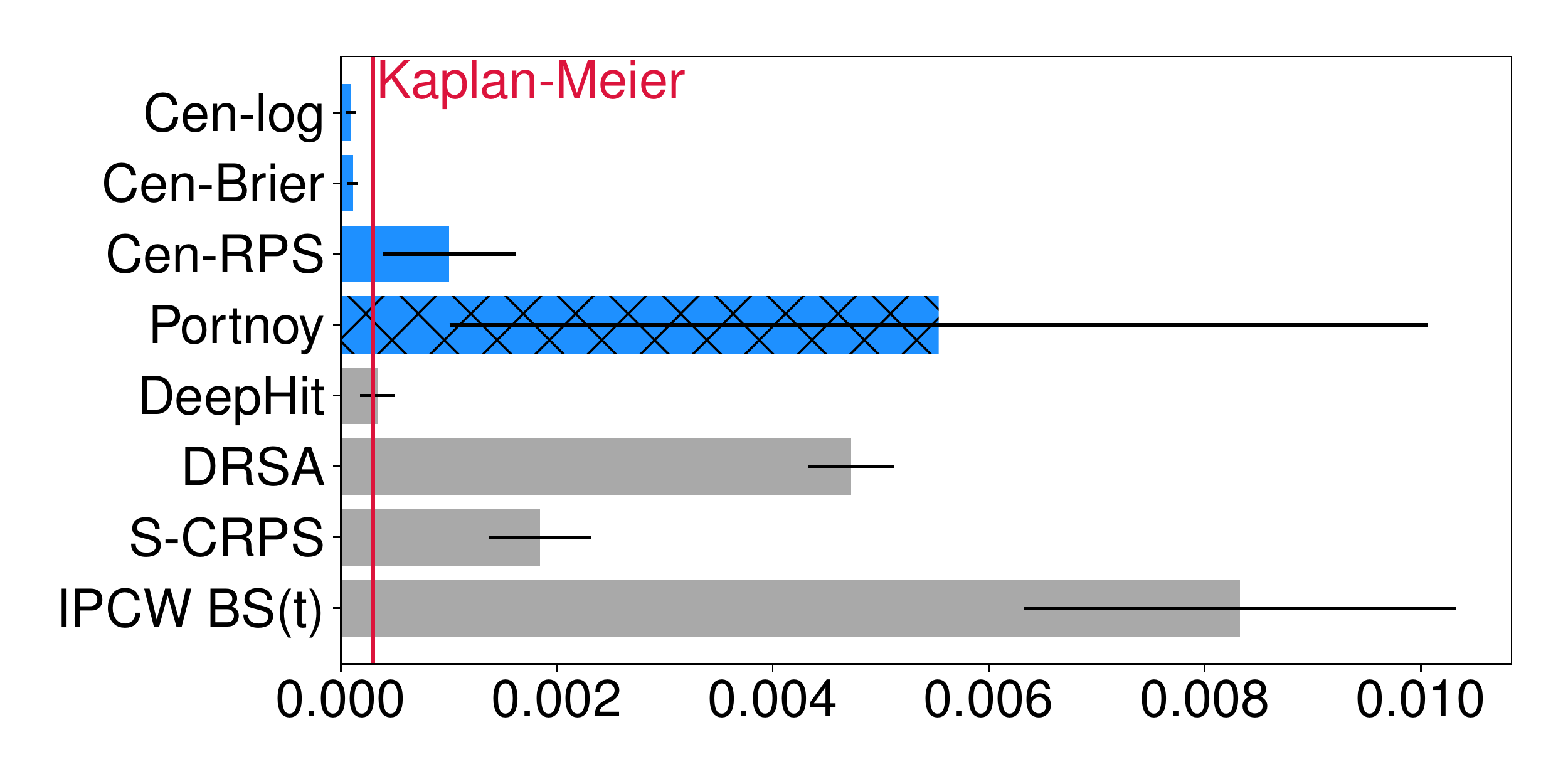}
  }\label{fig:prostateSurvival_D}
  \subfigure[D-calibration on support]{
    \includegraphics[width=0.31\textwidth]{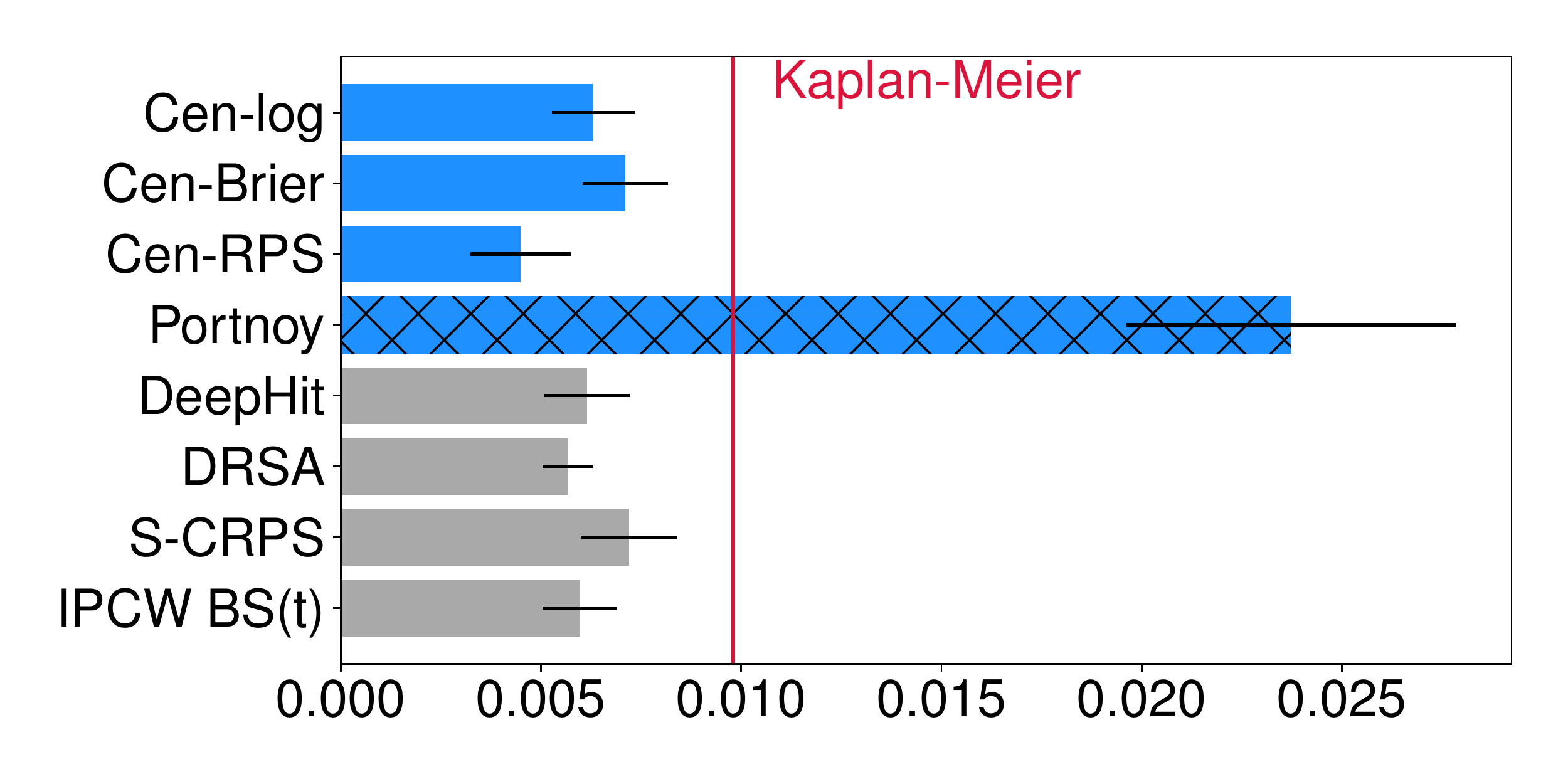}
  }\label{fig:support_D}
  \subfigure[KM-calibration on flchain]{
    \includegraphics[width=0.31\textwidth]{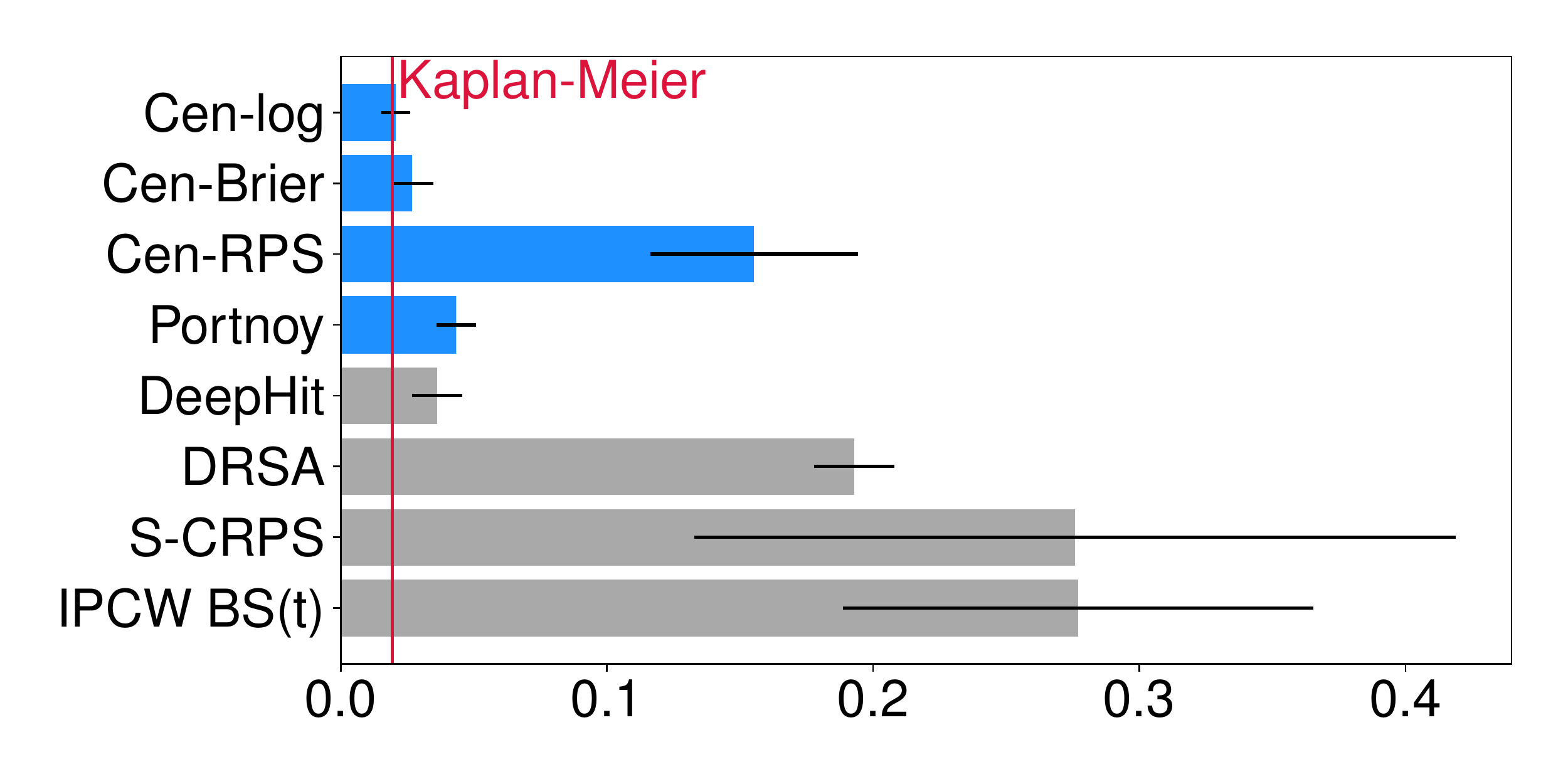}
  }\label{fig:flchain_KM}
  \subfigure[KM-calibration on prostateSurvival]{
    \includegraphics[width=0.31\textwidth]{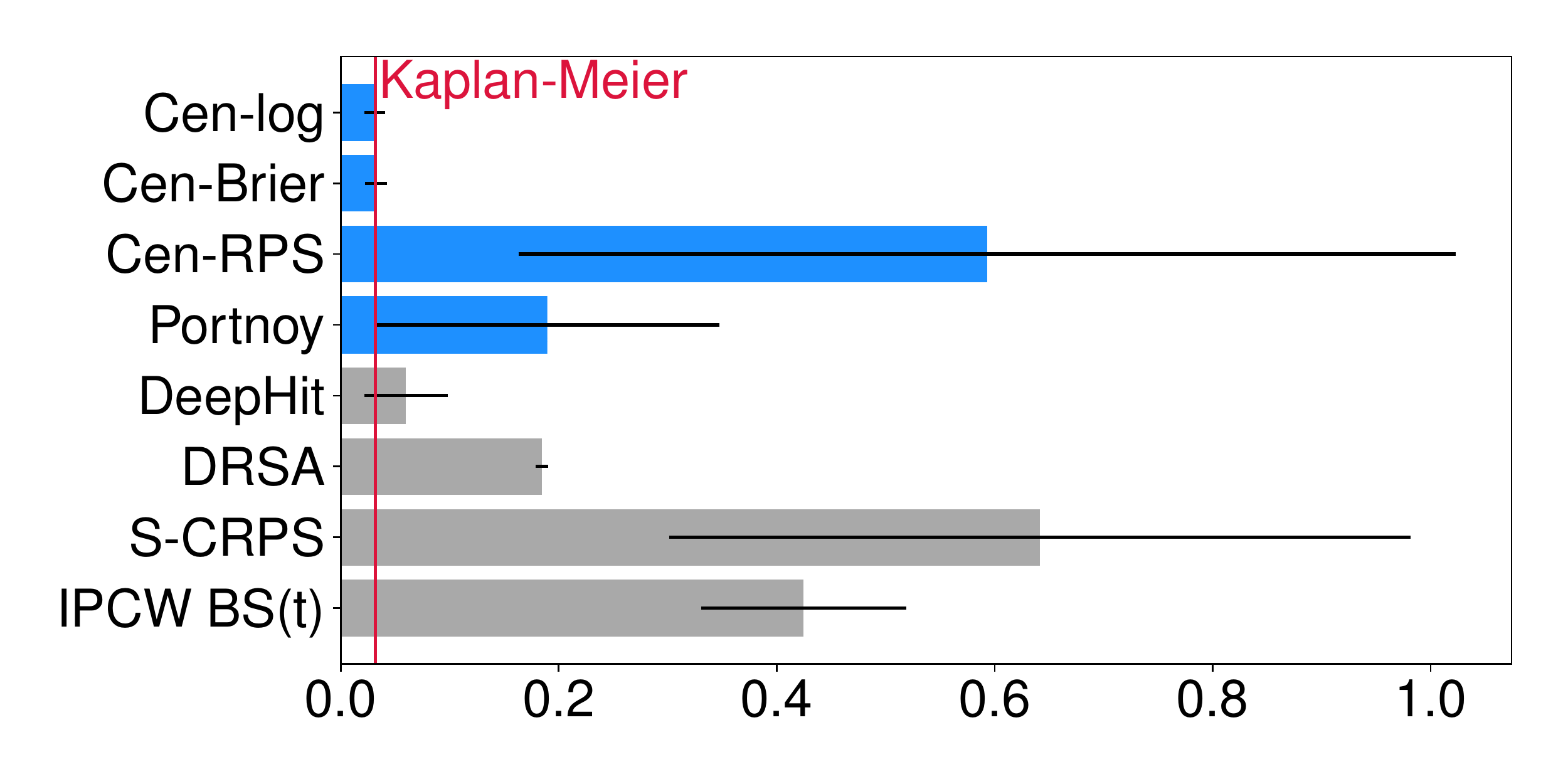}
  }\label{fig:prostateSurvival_KM}
  \subfigure[KM-calibration on support]{
    \includegraphics[width=0.31\textwidth]{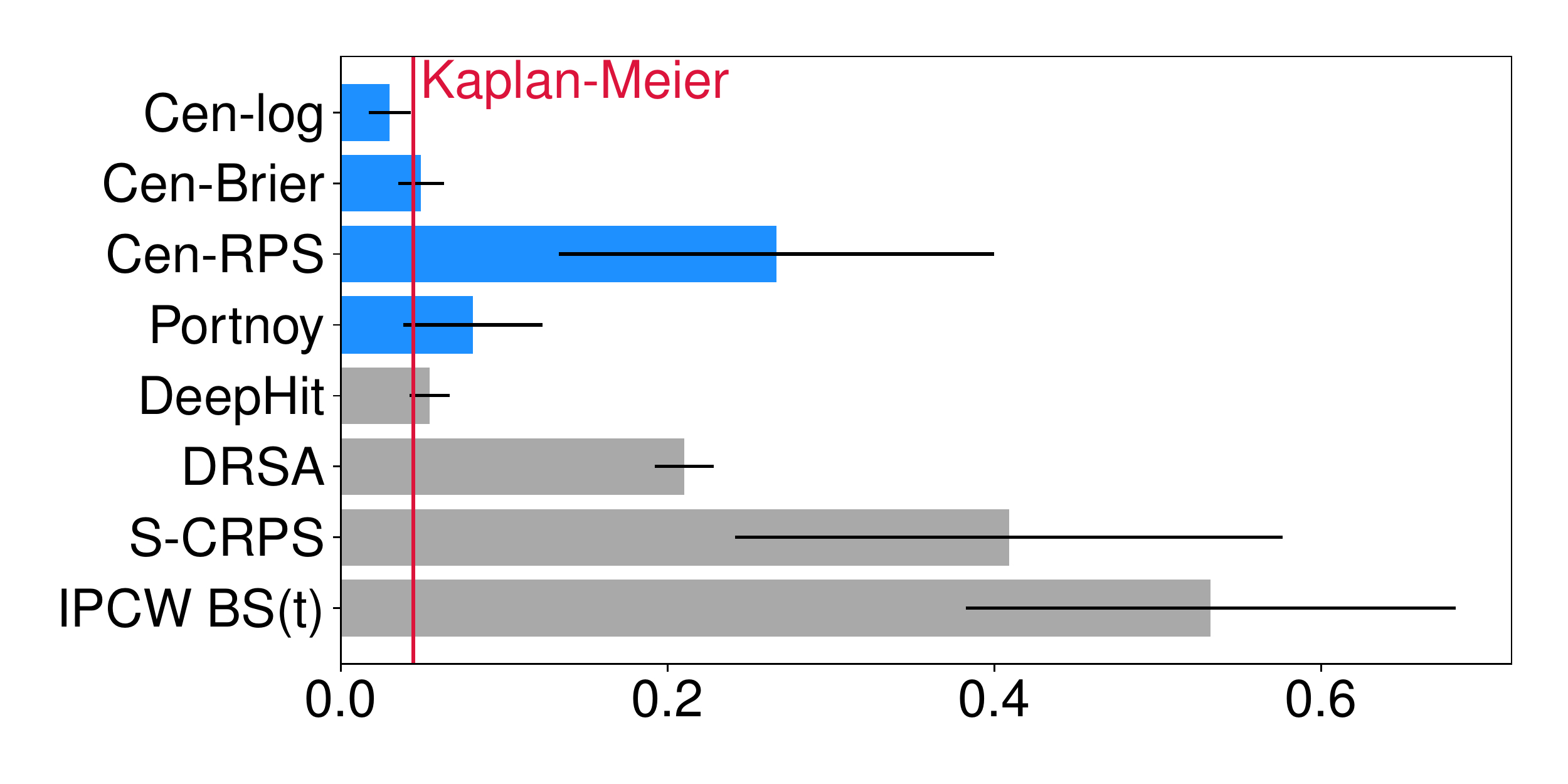}
  }\label{fig:support_KM}
\caption{Prediction performance (lower is better) comparison on three datasets with $S_{\rm Cen-log-simple}$, KM-calibration, and D-calibration.}
\label{fig:comparison}
\end{figure*}

In our experiments, we compared practical prediction performances of various loss functions on real datasets.  We used three datasets for the survival analysis from the packages in R~\cite{R}: the flchain dataset~\cite{DKKL12}, which was obtained from the ``survival'' package and contains 7874 data points (69.9\% of which are censored), the prostateSurvival dataset~\cite{LAMS09}, which was obtained from the ``asaur'' package and contains 14294  data points (71.7\% of which are censored), and the support dataset~\cite{KHLG95}, which was obtained from the ``casebase'' package and contains 9104 data points (31.9\% of which are censored).  For each dataset, we divided the time interval $[0,z_{\max}+\epsilon)$, where $\epsilon = 10^{-3}$, into $B$ equal-length intervals to get the thresholds $\{ \zeta_{i} \}_{i=0}^{B}$ for distribution regression-based survival analysis, and we divided the unit interval $[0,1]$ into $B$ equal-length intervals to get the quantile levels $\{ \tau_{i} \}_{i=0}^{B}$ for quantile regression-based survival analysis.  Unless otherwise stated, we set $B=32$.

All our experiments were conducted on a virtual machine with an Intel Xeon CPU (3.30 GHz) processor without any GPU and 64 GB of memory running Red Hat Enterprise Linux Server 7.6.  We used Python 3.7.4 and PyTorch 1.7.1 for the implementation.

We estimated $\hat{F}(t)$ by combining a multi-layer perceptron (MLP) and the IR algorithm (see Sec.~\ref{sec:portnoy}) to estimate $w$ or $\{ w_{i} \}_{i=0}^{B-1}$.  The MLP consists of three hidden layers containing 128 neurons, and the number of outputs was $B$. The type of activation function after the hidden layer was the rectified linear unit (ReLU), and the activation function at the output node was softmax.  The softmax function is used to satisfy the assumption that $\hat{F}(t)$ is a monotonically increasing continuous function.  In distribution regression-based survival analysis, each output of MLP estimates $\hat{f}_{i} = \hat{F}(\zeta_{i+1}) - \hat{F}(\zeta_{i})$ for $i=0,1,\ldots,B-1$.  By using these outputs $\{ \hat{f}_{i} \}_{i=0}^{B-1}$, we can calculate $\{ \hat{F}(\zeta_{i}) \}_{i=0}^{B}$ and we can represent the function $\hat{F}(t)$ as a piecewise linear function connecting the values $\{ \hat{F}(\zeta_{i}) \}_{i=0}^{B}$.  Since $\hat{f}_{i} > 0$ holds for all $i$, $\hat{F}(t)$ estimated in this way is a monotonically increasing continuous function.  We can estimate $\hat{F}$ for quantile regression-based survival analysis by using a similar way.

For the training of the neural network, we used the Adam optimizer~\cite{KB15} with the learning rate $0.001$, and the other parameters were set to their default values.  We ran training for $300$ epochs for our neural network models.  Our implementation of the scoring rules are available at \texttt{https://github.com/IBM/dqs}.

We compared the prediction performances of various scoring rules (i.e., loss functions), and Fig.~\ref{fig:comparison} shows the results.  In these experiments, we split the data points into training (60\%), validation (20\%), and test (20\%), and each bar shows the mean of the measurements on the test data of five random splits together with the error bar, which represents the standard deviation.  We used $S_{\rm Cen-log-simple}$ (Eq.~(\ref{eq:logarithmic_simple})) as a metric for discrimination performance and D-calibration~\cite{HHDG20} and KM-calibration (see Sec.~\ref{sec:metric}) as calibration metrics, where we used 20 bins of equal length for D-calibration.  For the calibration metrics, we added the mean D-calibration and mean KM-calibration of the Kaplan-Meier estimator~\cite{KM58} as a red line in each graph.  Since the Kaplan-Meier estimator is calibrated in theory, the values of the D-calibration and the KM-calibration of this estimator should be regarded as close to zeros.   In this figure, the four scoring rules Cen-log ($S_{\rm Cen-log}$ defined in Eq.~(\ref{eq:logarithmic_censored})), Cen-Brier ($S_{\rm Cen-Brier}$ defined in Eq.~(\ref{eq:cen-Brier})), Cen-RPS ($S_{\rm Cen-RPS}$ defined in Eq.~(\ref{eq:censored_RPS})), and Portnoy ($S_{\rm Portnoy}$ defined in Eq.~(\ref{eq:Portnoy})) are proved to be conditionally proper in this paper.  Note that Cen-log is similar to the scoring rule (Eq.~(\ref{eq:logarithmic_infty})) that is proved to be strictly proper in~\cite{RHSS22} and Portnoy is proposed in~\cite{Por03}.  This figure also contains the results for other scoring rules in the state-of-the-art models for survival analysis: DeepHit~\cite{LZYS18} with parameter $\alpha=1$, DRSA~\cite{RQZYZQY19} with parameter $\alpha=0.25$, S-CRPS~\cite{ADZJSN19}, and IPCW BS($t$) game model~\cite{HGPWPR21}.  These four scoring rules are not proved to be proper.

\begin{table*}[t]
  \caption{Prediction performances of DeepHit (lower is better) with various $\alpha$ for $B=32$.}
  \label{table:deephit_alpha}
  \centering
  \begin{tabular}{ll|lll}
  Metric & Model & flchain & prostateSurvival & support \\
  \midrule
  $S_{\rm Cen-log-simple}$ & DeepHit $(\alpha=0)$ & $1.5059 \pm 0.0513$ & $1.3609 \pm 0.0301$ & $1.8296 \pm 0.0446$ \\
             & DeepHit $(\alpha=0.1)$    & $1.5200 \pm 0.0398$ & $1.3644 \pm 0.0293$ & $1.8481 \pm 0.0453$ \\
             & DeepHit $(\alpha=1)$  & $1.5858 \pm 0.0495$ & $1.3813 \pm 0.0318$ & $1.9996 \pm 0.0525$ \\
             & DeepHit $(\alpha=10)$ & $2.0313 \pm 0.1648$ & $1.5688 \pm 0.0823$ & $2.3657 \pm 0.0441$ \\
  \midrule
  D-calibration & DeepHit $(\alpha=0)$  & $0.0003 \pm 0.0001$ & $0.0001 \pm 0.0000$ & $0.0062 \pm 0.0012$ \\
             & DeepHit $(\alpha=0.1)$    & $0.0005 \pm 0.0002$ & $0.0001 \pm 0.0000$ & $0.0056 \pm 0.0009$ \\ 
             & DeepHit $(\alpha=1)$  & $0.0008 \pm 0.0003$ & $0.0003 \pm 0.0001$ & $0.0062 \pm 0.0010$ \\
             & DeepHit $(\alpha=10)$ & $0.0138 \pm 0.0046$ & $0.0064 \pm 0.0035$ & $0.0179 \pm 0.0053$ \\
  \midrule
  KM-calibration & DeepHit $(\alpha=0)$ & $0.0213 \pm 0.0049$ & $0.0343 \pm 0.0102$ & $0.0288 \pm 0.0127$ \\
             & DeepHit $(\alpha=0.1)$    & $0.0264 \pm 0.0071$ & $0.0418 \pm 0.0139$ & $0.0249 \pm 0.0067$ \\
             & DeepHit $(\alpha=1)$  & $0.0362 \pm 0.0084$ & $0.0599 \pm 0.0341$ & $0.0545 \pm 0.0110$ \\
             & DeepHit $(\alpha=10)$ & $0.2077 \pm 0.0543$ & $0.4937 \pm 0.1772$ & $0.4273 \pm 0.1188$ \\
  \end{tabular}
\end{table*}

Figure~\ref{fig:comparison} shows that the prediction performances of the four extended scoring rules (Cen-log, Cen-Brier, Cen-RPS, and Portnoy) were not similar, even though we prove that these four scoring rules are conditionally proper and the outputs are expected to be similar if the parameters $\hat{w}$ and $\{ \hat{w}_{i} \}_{i=0}^{B-1}$ are correct.  The scoring rules Cen-log and Cen-Brier outperformed the scoring rules Cen-RPS and Portnoy in discrimination performance $S_{\rm Cen-log-simple}$.   These results indicate that the accuracy of the estimated parameters $\hat{w}$ and $\{ \hat{w}_{i} \}_{i=0}^{B-1}$ by the IR algorithm are important when we use these scoring rules in practice.  The major difference between these scoring rules are that, whereas the set of parameters $\{ w_{i} \}_{i=0}^{B-1}$ in Cen-log and Cen-Brier usually satisfies $w_{i} \approx 0$ or $w_{i}=1$, the set of parameters $\{ w_{i} \}_{i=0}^{B}$ in Cen-RPS can take an arbitrary value $0 \leq w_{i} \leq 1$.  The parameter $w$ in Portnoy can also take an arbitrary value $0 \leq w \leq 1$.  Therefore, Cen-log and Cen-Brier seem less sensitive to the accuracy of the parameters than Cen-RPS and Portnoy.  This figure also shows that the other four scoring rules (DeepHit, DRSA, S-CRPS, and IPCW BS($t$)) performed worse than Cen-log and Cen-Brier.  Note that IPCW BS($t$) model is similar to the IR algorithm in that both of the algorithms are used to estimate unknown parameters, but the loss function of IPCW BS($t$) model is not proved to be proper in terms of the theory of scoring rules.  With respect to the calibration metrics, Cen-log and Cen-Brier showed comparable performance with the Kaplan-Meier estimator.  However, the other scoring rules showed worse calibration performances for at least one of D-calibration and KM-calibration.

Regarding the parameter $\alpha$ of DeepHit~\cite{LZYS18}, we conducted additional experiments by changing this parameter.  The loss function of DeepHit consists of two terms.  The first term is equal to the extension of the logarithmic score $S_{\rm Cen-log-simple}$, and the second term is used to improve a ranking metric (i.e., a variant of C-index).  The parameter $\alpha$ is used to control the balance between these two terms, and the weight for the second term is increased by using a large $\alpha$.  Note that the scoring rule $S_{\rm Cen-log-simple}$ is equivalent to DeepHit with $\alpha=0$.  Table~\ref{table:deephit_alpha} shows the results for $\alpha \in \{ 0, 0.1, 1, 10 \}$.  This table shows that the prediction performances of DeepHit became worse as $\alpha$ increases.  This means that we should set $\alpha=0$ when we use DeepHit.

\section{Conclusion}

We discussed extensions of four scoring rules for survival analysis, and we proved that these extensions are proper if the parameter $w$ or $\{ w_{i} \}_{i=0}^{B-1}$ is correct.  These proofs reduce the problem of estimating $\hat{F}$ to the problem of estimating the parameter $w$ or $\{ w_{i} \}_{i=0}^{B-1}$ in proper scoring rules.  We also demonstrated that the models with $S_{\rm Cen-log}$ and $S_{\rm Cen-Brier}$ as loss functions performed the best in our experiments.  These results indicate that it is better to use a proper scoring rule that has low sensitivity on the parameter.  In addition, we clarified the hidden assumption in the proof of the properness for $S_{\rm Cen-cont-log}$~\cite{RHSS22}.  This suggests us to use a sufficiently large $B$ when we use it, and we demonstrated that such $B$ can be found by comparing the prediction performances of $S_{\rm Cen-log-simple}$ and $S_{\rm Cen-log}$ with various $B$.


\bibliography{ScoringRule}
\bibliographystyle{icml2023}

\newpage
\appendix
\onecolumn

\section{Proofs of Theorems}

We give proofs of the theorems, which are omitted from the main body of this paper.

\subsection{Portnoy's Estimator}\label{sec:proof_Portnoy}

We show a proof of Theorem~\ref{theorem:portnoy}.

\begin{proof}
We consider a fixed $c \sim C$, and we prove
\begin{equation}
\mathop{\mathbb{E}}_{t \sim T | C=c}[ S_{\rm Portnoy}(\hat{F}, (z, \delta); w, \tau) ] \geq \mathop{\mathbb{E}}_{t \sim T | C=c}[ S_{\rm Portnoy}(F, (z, \delta); w, \tau) ] \label{eq:proof_portnoy}
\end{equation}
for these four cases separately.
\begin{itemize}
\item{Case 1}: $c \leq \min \{ F^{-1}(\tau), \hat{F}^{-1}(\tau) \}$.
\item{Case 2}: $\max \{ F^{-1}(\tau), \hat{F}^{-1}(\tau) \} < c$.
\item{Case 3}: $F^{-1}(\tau) < c \leq \hat{F}^{-1}(\tau)$.
\item{Case 4}: $\hat{F}^{-1}(\tau) < c \leq F^{-1}(\tau)$.
\end{itemize}
Note that, if Inequality~(\ref{eq:proof_portnoy}) holds for any $c \sim C$, we can marginalize the inequality with respect to $C$, and we have
\[ \mathop{\mathbb{E}}_{t \sim T, c \sim C} [S_{\rm Portnoy}(\hat{F}, (z, \delta); w, \tau)] \geq \mathop{\mathbb{E}}_{t \sim T, c \sim C} [S_{\rm Portnoy}(F, (z,\delta); w, \tau)], \]
which means that $S_{\rm Portnoy}(\hat{F}, (z, \delta); w, \tau)$ is proper.  Therefore, we prove Inequality~(\ref{eq:proof_portnoy}) for the four cases.

\paragraph{Case 1.}

We prove the case for $c \leq \min \{ F^{-1}(\tau), \hat{F}^{-1}(\tau) \}$.  This means that $\tau_{c} \leq \tau$ and $w= (\tau - \tau_{c})/(1 - \tau_{c})$.  Hence, we have
\begin{eqnarray*}
S_{\rm Portnoy}(\hat{F}, (z, \delta); w, \tau)
& = &
\begin{cases}
\rho_{\tau}(\hat{F}^{-1}(\tau), t) & \mbox{if} \ t \leq c, \\
w \rho_{\tau}(\hat{F}^{-1}(\tau), c) + (1-w) \rho_{\tau}(\hat{F}^{-1}(\tau), z_{\infty}) & \mbox{if} \ t > c,
\end{cases} \\
& = &
\begin{cases}
(1 - \tau)(\hat{F}^{-1}(\tau) - t) & \mbox{if} \ t \leq c, \\
\frac{\tau - \tau_{c}}{1 - \tau_{c}}(1 - \tau)(\hat{F}^{-1}(\tau) - c) + \frac{1 - \tau}{1 - \tau_{c}} \tau(z_{\infty} - \hat{F}^{-1}(\tau)) & \mbox{if} \ t > c,
\end{cases} \\
& = &
\begin{cases}
(1 - \tau)(\hat{F}^{-1}(\tau) - t) & \mbox{if} \ t \leq c, \\
\frac{- \tau_{c}(1-\tau)}{1 - \tau_{c}} \hat{F}^{-1}(\tau) - \frac{(\tau-\tau_{c})(1 - \tau)}{1-\tau_{c}} c + \frac{(1 - \tau) \tau}{1 - \tau_{c}} z_{\infty} & \mbox{if} \ t > c.
\end{cases}
\end{eqnarray*}
By Assumption~\ref{assumption:independence}, we have $\mathrm{Pr}(t \leq c | C=c) = \mathrm{Pr}(t \leq c) = \tau_{c}$ and $\mathrm{Pr}(t > c | C=c) = 1 - \tau_{c}$.  Hence, we have
\begin{eqnarray*}
\mathop{\mathbb{E}}_{t \sim T | C=c}[ S_{\rm Portnoy}(\hat{F}, (z, \delta); w, \tau) ]
& = & {\rm Pr}(t \leq c | C=c) (1 - \tau) \hat{F}^{-1}(\tau) - (1-\tau) \mathop{\mathbb{E}}_{t \sim T | C=c, t \leq c}[t] \\
& & - {\rm Pr}(t > c | C=c) \frac{\tau_{c}(1-\tau)}{1 - \tau_{c}} \hat{F}^{-1}(\tau) - \frac{(\tau-\tau_{c})(1 - \tau)}{1-\tau_{c}} c + \frac{(1 - \tau) \tau}{1 - \tau_{c}} z_{\infty} \\
& = &  - (1-\tau) \mathop{\mathbb{E}}_{t \sim T | C=c, t \leq c}[t] - \frac{(\tau-\tau_{c})(1 - \tau)}{1-\tau_{c}} c + \frac{(1 - \tau) \tau}{1 - \tau_{c}} z_{\infty}.
\end{eqnarray*}
Since this value is the same for $S_{\rm Portnoy}(\hat{F}, (z, \delta); w, \tau)$ and $S_{\rm Portnoy}(F, (z, \delta); w, \tau)$, we have
\[ \mathop{\mathbb{E}}_{t \sim T | C=c}[ S_{\rm Portnoy}(\hat{F}, (z, \delta); w, \tau) ] = \mathop{\mathbb{E}}_{t \sim T | C=c}[ S_{\rm Portnoy}(F, (z, \delta); w, \tau) ]. \]

\paragraph{Case 2.}

We prove the case for $\max \{ F^{-1}(\tau), \hat{F}^{-1}(\tau) \} < c$.

If $F^{-1}(\tau) \leq \hat{F}^{-1}(\tau) < c$, then we have
\begin{eqnarray*}
S_{\rm Portnoy}(\hat{F}, (z, \delta); w, \tau)
& = &
\begin{cases}
\rho_{\tau}(\hat{F}^{-1}(\tau), t) & \mbox{if} \ t \leq c, \\
w \rho_{\tau}(\hat{F}^{-1}(\tau), c) + (1-w) \rho_{\tau}(\hat{F}^{-1}(\tau), z_{\infty}) & \mbox{if} \ t > c,
\end{cases} \\
& = &
\begin{cases}
(1 - \tau) (\hat{F}^{-1}(\tau) - t) & \mbox{if} \ t \leq \hat{F}^{-1}(\tau), \\
- \tau (\hat{F}^{-1}(\tau) - t) & \mbox{if} \ \hat{F}^{-1}(\tau) < t \leq c, \\
- w \tau (\hat{F}^{-1}(\tau) - c) - (1-w) \tau (\hat{F}^{-1}(\tau) - z_{\infty}) & \mbox{if} \ t > c,
\end{cases} \\
& \geq &
\begin{cases}
(1 - \tau) (\hat{F}^{-1}(\tau) - t) & \mbox{if} \ t \leq F^{-1}(\tau), \\
- \tau (\hat{F}^{-1}(\tau) - t) & \mbox{if} \ F^{-1}(\tau) < t \leq c, \\
- \tau \hat{F}^{-1}(\tau) + w \tau c + (1-w) z_{\infty} & \mbox{if} \ t > c,
\end{cases}
\end{eqnarray*}
where we used $-\tau(\hat{F}^{-1}(\tau) - t) \leq (1-\tau)(\hat{F}^{-1}(\tau) - t)$ when $F^{-1}(\tau) < t \leq \hat{F}^{-1}(\tau)$ for the inequality.

If $\hat{F}^{-1}(\tau) \leq F^{-1}(\tau) < c$, then we have
\begin{eqnarray*}
S_{\rm Portnoy}(\hat{F}, (z, \delta); w, \tau)
& = &
\begin{cases}
\rho_{\tau}(\hat{F}^{-1}(\tau), t) & \mbox{if} \ t \leq c, \\
w \rho_{\tau}(\hat{F}^{-1}(\tau), c) + (1-w) \rho_{\tau}(\hat{F}^{-1}(\tau), z_{\infty}) & \mbox{if} \ t > c,
\end{cases} \\
& = &
\begin{cases}
(1 - \tau) (\hat{F}^{-1}(\tau) - t) & \mbox{if} \ t \leq \hat{F}^{-1}(\tau), \\
- \tau (\hat{F}^{-1}(\tau) - t) & \mbox{if} \ \hat{F}^{-1}(\tau) < t \leq c, \\
- w \tau (\hat{F}^{-1}(\tau) - c) - (1-w) \tau (\hat{F}^{-1}(\tau) - z_{\infty}) & \mbox{if} \ t > c,
\end{cases} \\
& > &
\begin{cases}
(1 - \tau) (\hat{F}^{-1}(\tau) - t) & \mbox{if} \ t \leq F^{-1}(\tau), \\
- \tau (\hat{F}^{-1}(\tau) - t) & \mbox{if} \ F^{-1}(\tau) < t \leq c, \\
- \tau \hat{F}^{-1}(\tau) + w \tau c + (1-w) z_{\infty} & \mbox{if} \ t > c,
\end{cases}
\end{eqnarray*}
where we used $-\tau(\hat{F}^{-1}(\tau) - t) > (1-\tau)(\hat{F}^{-1}(\tau) - t)$ when $\hat{F}^{-1}(\tau) < t \leq F^{-1}(\tau)$ for the inequality.

By Assumption~\ref{assumption:independence}, we have $\mathrm{Pr}(t \leq F^{-1}(\tau) | C=c) = \mathrm{Pr}(t \leq F^{-1}(\tau)) = \tau$, $\mathrm{Pr}(F^{-1}(\tau) < t | C=c) = 1 - \tau$, and $\mathrm{Pr}(c < t | C=c) = 1 - \tau_{c}$.  Hence, we have
\begin{eqnarray*}
\mathop{\mathbb{E}}_{t \sim T | C=c}[ S_{\rm Portnoy}(\hat{F}, (z, \delta); w, \tau) ]
& \geq & {\rm Pr}(t \leq F^{-1}(\tau) | C=c) (1 - \tau) \hat{F}^{-1}(\tau) - (1-\tau) \mathop{\mathbb{E}}_{t \sim T | C=c, t \leq F^{-1}(\tau)}[t] \\
& & - {\rm Pr}(F^{-1}(\tau) < t | C=c) \tau \hat{F}^{-1}(\tau) \\
& & + {\rm Pr}(c < t | C=c) (w \tau c + (1-w) z_{\infty}) \\
& = & - (1-\tau) \mathop{\mathbb{E}}_{t \sim T | C=c, t \leq F^{-1}(\tau)}[t] + (1-\tau_{c})(w \tau c + (1-w) z_{\infty}).
\end{eqnarray*}
By using a similar argument, we have
\[
\mathop{\mathbb{E}}_{t \sim T | C=c}[ S_{\rm Portnoy}(F, (z, \delta); w, \tau) ]
= - (1-\tau) \mathop{\mathbb{E}}_{t \sim T | C=c, t \leq F^{-1}(\tau)}[t] + (1-\tau_{c})(w \tau c + (1-w) z_{\infty}).
\]
Note that this equation holds with equality.

Hence, we have
\[ \mathop{\mathbb{E}}_{t \sim T | C=c}[ S_{\rm Portnoy}(\hat{F}, (z, \delta); w, \tau) ] \geq \mathop{\mathbb{E}}_{t \sim T | C=c}[ S_{\rm Portnoy}(F, (z, \delta); w, \tau) ]. \]

\paragraph{Case 3.}

We prove the case for $F^{-1}(\tau) < c \leq \hat{F}^{-1}(\tau)$.

We have
\begin{eqnarray*}
S_{\rm Portnoy}(\hat{F}, (z, \delta); w, \tau)
& = &
\begin{cases}
\rho_{\tau}(\hat{F}^{-1}(\tau), t) & \mbox{if} \ t \leq c, \\
w \rho_{\tau}(\hat{F}^{-1}(\tau), c) + (1-w) \rho_{\tau}(\hat{F}^{-1}(\tau), z_{\infty}) & \mbox{if} \ t > c, \\
\end{cases} \\
& = &
\begin{cases}
(1 - \tau) (\hat{F}^{-1}(\tau) - t) & \mbox{if} \ t \leq c, \\
w (1 - \tau) (\hat{F}^{-1}(\tau) - c) - (1-w) \tau (\hat{F}^{-1}(\tau) - z_{\infty}) & \mbox{if} \  t > c, \\
\end{cases} \\
& \geq &
\begin{cases}
(1 - \tau) (\hat{F}^{-1}(\tau) - t) & \mbox{if} \ t \leq F^{-1}(\tau), \\
- \tau (\hat{F}^{-1}(\tau) - t) & \mbox{if} \ F^{-1}(\tau) < t \leq c, \\
- w \tau (\hat{F}^{-1}(\tau) - c) - (1-w) \tau (\hat{F}^{-1}(\tau) - z_{\infty}) & \mbox{if} \  t > c, \\
\end{cases} \\
\end{eqnarray*}
where we used $(1 - \tau) (\hat{F}^{-1}(\tau) - t) \geq - \tau (\hat{F}^{-1}(\tau) - t)$ when $F^{-1}(\tau) < t \leq c$ and $w (1 - \tau) (\hat{F}^{-1}(\tau) - c) \geq - w \tau (\hat{F}^{-1}(\tau) - c)$ when $t > c$.

By using a similar argument, we have
\begin{eqnarray*}
S_{\rm Portnoy}(F, (z, \delta); w, \tau)
& = &
\begin{cases}
\rho_{\tau}(F^{-1}(\tau), t) & \mbox{if} \ t \leq c, \\
w \rho_{\tau}(F^{-1}(\tau), c) + (1-w) \rho_{\tau}(F^{-1}(\tau), z_{\infty}) & \mbox{if} \ t > c, \\
\end{cases} \\
& = &
\begin{cases}
(1 - \tau) (F^{-1}(\tau) - t) & \mbox{if} \ t \leq F^{-1}(\tau), \\
- \tau (F^{-1}(\tau) - t) & \mbox{if} \ F^{-1}(\tau) < t \leq c, \\
- w \tau (F^{-1}(\tau) - c) - (1-w) \tau (F^{-1}(\tau) - z_{\infty}) & \mbox{if} \  t > c, \\
\end{cases}
\end{eqnarray*}
Note that this equation holds with equality.

Hence, we have
\[ \mathop{\mathbb{E}}_{t \sim T | C=c}[ S_{\rm Portnoy}(\hat{F}, (z, \delta); w, \tau) ] \geq \mathop{\mathbb{E}}_{t \sim T | C=c}[ S_{\rm Portnoy}(F, (z, \delta); w, \tau) ]. \]

\paragraph{Case 4.}

We prove the case for $\hat{F}^{-1}(\tau) < c \leq F^{-1}(\tau)$. 

Regarding $\hat{F}$, we have
\begin{eqnarray*}
S_{\rm Portnoy}(\hat{F}, (z, \delta); w, \tau)
& = &
\begin{cases}
\rho_{\tau}(\hat{F}^{-1}(\tau), t) & \mbox{if} \ t \leq c, \\
w \rho_{\tau}(\hat{F}^{-1}(\tau), c) + (1-w) \rho_{\tau}(\hat{F}^{-1}(\tau), z_{\infty}) & \mbox{if} \ t > c, \\
\end{cases} \\
& = &
\begin{cases}
(1 - \tau) (\hat{F}^{-1}(\tau) - t) & \mbox{if} \ t \leq \hat{F}^{-1}(\tau), \\
- \tau (\hat{F}^{-1}(\tau) - t) & \mbox{if} \ \hat{F}^{-1}(\tau) < t \leq c, \\
- w \tau (\hat{F}^{-1}(\tau) - c) - (1-w) \tau (\hat{F}^{-1}(\tau) - z_{\infty}) & \mbox{if} \  t > c, \\
\end{cases} \\
& > &
\begin{cases}
(1 - \tau) (\hat{F}^{-1}(\tau) - t) & \mbox{if} \ t \leq c, \\
- \tau \hat{F}^{-1}(\tau) + w \tau c + (1-w) \tau z_{\infty} & \mbox{if} \  t > c, \\
\end{cases}
\end{eqnarray*}
where we used $-\tau(\hat{F}^{-1}(\tau) - t) > (1-\tau)(\hat{F}^{-1}(\tau) - t)$ when $\hat{F}^{-1}(\tau) < t \leq c$ for the inequality. By Assumption~\ref{assumption:independence}, we have $\mathrm{Pr}(t \leq c | C=c) = \mathrm{Pr}(t \leq c) = \tau_{c}$ and $\mathrm{Pr}(t > c | C=c) = 1 - \tau_{c}$.  Hence, we have
\begin{eqnarray*}
\mathop{\mathbb{E}}_{t \sim T | C=c}[ S_{\rm Portnoy}(\hat{F}, (z, \delta); w, \tau) ]
& > & {\rm Pr}(t \leq c | C=c) (1 - \tau) \hat{F}^{-1}(\tau) - (1-\tau) \mathop{\mathbb{E}}_{t \sim T | C=c, t \leq c}[t] \\
& & + {\rm Pr}(t > c | C=c) (-\tau \hat{F}^{-1}(\tau) + w \tau c + (1-w) \tau z_{\infty}) \\
& > & \tau_{c} (1 - \tau) \hat{F}^{-1}(\tau) - (1-\tau) \mathop{\mathbb{E}}_{t \sim T | C=c, t \leq c}[t] \\
& & - (1-\tau_{c}) \tau \hat{F}^{-1}(\tau) + (1-\tau_{c})(w \tau c + (1-w) \tau z_{\infty}) \\
& > & (\tau_{c} - \tau) \hat{F}^{-1}(\tau) - (1-\tau) \mathop{\mathbb{E}}_{t \sim T | C=c, t \leq c}[t] + (1-\tau_{c})(w \tau c + (1-w) \tau z_{\infty}).
\end{eqnarray*}

Regarding $F$, since $w=(\tau - \tau_{c})/(1 - \tau_{c})$, we have
\begin{eqnarray*}
S_{\rm Portnoy}(F, (z, \delta); w, \tau)
& = &
\begin{cases}
\rho_{\tau}(F^{-1}(\tau), t) & \mbox{if} \ t \leq c, \\
w \rho_{\tau}(F^{-1}(\tau), c) + (1-w) \rho_{\tau}(F^{-1}(\tau), z_{\infty}) & \mbox{if} \ t > c, \\
\end{cases} \\
& = &
\begin{cases}
(1 - \tau) (F^{-1}(\tau) - t) & \mbox{if} \ t \leq c, \\
w (1 - \tau) (F^{-1}(\tau) - c) - (1-w) \tau (F^{-1}(\tau) - z_{\infty}) & \mbox{if} \  t > c, \\
\end{cases} \\
& = &
\begin{cases}
(1 - \tau) (F^{-1}(\tau) - t) & \mbox{if} \ t \leq c, \\
- \frac{\tau_{c}(1 - \tau)}{1 - \tau_{c}} F^{-1}(\tau) - w(1-\tau)c + (1-w) \tau z_{\infty} & \mbox{if} \ t > c, \\
\end{cases}
\end{eqnarray*}
By Assumption~\ref{assumption:independence}, we have $\mathrm{Pr}(t \leq c | C=c) = \mathrm{Pr}(t \leq c) = \tau_{c}$ and $\mathrm{Pr}(t > c | C=c) = 1 - \tau_{c}$.  Hence, we have
\begin{eqnarray*}
\mathop{\mathbb{E}}_{t \sim T | C=c}[ S_{\rm Portnoy}(\hat{F}, (z, \delta); w, \tau) ]
& = & {\rm Pr}(t \leq c | C=c) (1 - \tau) \hat{F}^{-1}(\tau) - (1-\tau) \mathop{\mathbb{E}}_{t \sim T | C=c, t \leq c}[t] \\
& & + {\rm Pr}(t > c | C=c) (-\frac{\tau_{c}(1 - \tau)}{1 - \tau_{c}} F^{-1}(\tau) - w(1-\tau)c + (1-w) \tau z_{\infty}) \\
& = & \tau_{c} (1 - \tau) \hat{F}^{-1}(\tau) - (1-\tau) \mathop{\mathbb{E}}_{t \sim T | C=c, t \leq c}[t] \\
& & - \tau_{c} (1-\tau) \hat{F}^{-1}(\tau) + (1-\tau_{c})(-w (1-\tau) c + (1-w) \tau z_{\infty}) \\
& = &  - (1-\tau) \mathop{\mathbb{E}}_{t \sim T | C=c, t \leq c}[t] + (1-\tau_{c})(-w (1-\tau) c + (1-w) \tau z_{\infty}).
\end{eqnarray*}

Therefore, since $\tau_{c} \leq \tau$ and $w=(\tau - \tau_{c})/(1 - \tau_{c})$, we have
\begin{eqnarray*}
\lefteqn{\mathop{\mathbb{E}}_{t \sim T | C=c}[ S_{\rm Portnoy}(\hat{F}, (z, \delta); w, \tau) ] - \mathop{\mathbb{E}}_{t \sim T | C=c}[ S_{\rm Portnoy}(F, (z, \delta); w, \tau) ]} \\
& > & ((\tau_{c} - \tau) \hat{F}^{-1}(\tau) + (1-\tau_{c}) w \tau c) + (1-\tau_{c})w (1-\tau) c \\
& = & (\tau_{c} - \tau) (\hat{F}^{-1}(\tau) - c) \\
& \geq & 0.
\end{eqnarray*}

\end{proof}

\subsection{Extension of Logarithmic Score}\label{sec:variant_logarithmic}

We show a proof of Theorem~\ref{theorem:logarithmic}.

\begin{proof}
We consider a fixed $c \sim C$, and let $t$ be a sample obtained from $T$.  Let $i$ be the index such that $\zeta_{i} \leq c < \zeta_{i+1}$.  Since Assumption~\ref{assumption:independence} holds, we have $\mathrm{Pr}(\zeta_{j} < t \leq \zeta_{j+1} | C=c) = \mathrm{Pr}(\zeta_{j} < t \leq \zeta_{j+1}) = F(\zeta_{j+1}) - F(\zeta_{j}) = f_{j}$ for any $j < i$, $\mathrm{Pr}(\zeta_{i} < t \leq c | C=c) = F(c) - F(\zeta_{i})$, and $\mathrm{Pr}(c < t| C=c) = \mathrm{Pr}(c < t) = 1 - F(c)$.  Hence, we have
\begin{eqnarray*}
\mathop{\mathbb{E}}_{t \sim T | C=c} [S_{\rm Cen-log}(\hat{F}, (z, \delta); \{ w_{k} \}_{k=0}^{B-1}, \{ \zeta_{k} \}_{k=0}^{B})]
& = & - \sum_{j < i} \mathrm{Pr}(\zeta_{j} < t \leq \zeta_{j+1} | C=c) \log \hat{f}_{j} \\
& & - \mathrm{Pr}(\zeta_{i} < t \leq c | C=c) \log \hat{f}_{i} \\
& & - \mathrm{Pr}(c < t | C=c) \left( w_{i} \log \hat{f}_{i} + (1-w_{i}) \log(1 - \hat{F}(\zeta_{i+1})) \right) \\
& = & - \sum_{j < i} f_{j} \log \hat{f}_{j} \\
& & - (F(c) - F(\zeta_{i})) \log \hat{f}_{i} \\
& & - (1 - F(c)) \left( w_{i} \log \hat{f}_{i} + (1-w_{i}) \log(1 - \hat{F}(\zeta_{i+1})) \right) \\
& = & - \sum_{j \leq i} f_{j} \log \hat{f}_{j} - (1 - F(\zeta_{i+1})) \log(1 - \hat{F}(\zeta_{i+1})),
\end{eqnarray*}
where we used $w_{i} = (F(\zeta_{i+1}) - F(c))/(1 - F(c))$ for the last equality.

Hence, we have
\begin{eqnarray}
\lefteqn{\mathop{\mathbb{E}}_{t \sim T | C=c} [S_{\rm Cen-log}(\hat{F}, (z, \delta); \{ w_{k} \}_{k=0}^{B-1}, \{ \zeta_{k} \}_{k=0}^{B})] - \mathop{\mathbb{E}}_{t \sim T | C=c} [S_{\rm Cen-log}(F, (z, \delta); \{ w_{k} \}_{k=0}^{B-1}, \{ \zeta_{k} \}_{k=0}^{B})]} \nonumber \\
& = & - \sum_{j \leq i} f_{j} (\log \hat{f}_{j} - \log f_{j}) - (1 - F(\zeta_{i+1})) (\log(1 - \hat{F}(\zeta_{i+1}))- \log(1 - F(\zeta_{i+1}))) \nonumber \\
& \geq & 0,  \label{ineq:logarithmic}
\end{eqnarray}
where we used the fact that the Kullback-Leibler divergence between two probability distributions is non-negative for the inequality.  This means that the inequality 
\[ - \sum_{k} p_{k} (\log \hat{p}_{k} - \log p_{k}) \geq 0 \]
holds for any two probability distributions $p_{k}$ and $\hat{p}_{k}$ and the equality holds only if $p_{k}=\hat{p}_{k}$ for all $k$.  Here, we use an $(i+2)$-dimensional vector $\bm{p}=(p_{0}, p_{1}, \ldots, p_{i+1})$; we set $p_{k} = f_{k}$ for all $k \leq i$ and we set $p_{i+1} = 1 - F(\zeta_{i+1})$.  Note that the vectors $\bm{p}$ and $\hat{\bm{p}}$ constructed in this way are a probability distribution (i.e., $\sum_{k} p_{k} = 1$).

Since Inequality~(\ref{ineq:logarithmic}) holds for any $c \sim C$, we marginalize the inequality with respect to $C$, and we have
\[ \mathop{\mathbb{E}}_{t \sim T, c \sim C} [S_{\rm Cen-log}(\hat{F}, (z, \delta); \{ w_{i} \}_{i=0}^{B-1}, \{ \zeta_{i} \}_{i=0}^{B})] \geq \mathop{\mathbb{E}}_{t \sim T, c \sim C} [S_{\rm Cen-log}(F, (z,\delta); \{ w_{i} \}_{i=0}^{B-1}, \{ \zeta_{i} \}_{i=0}^{B})], \]
which means that $S_{\rm Cen-log}(\hat{F}, (z, \delta); \{ w_{i} \}_{i=0}^{B-1}, \{ \zeta_{i} \}_{i=0}^{B})$ is proper.
\end{proof}

\subsection{Extension of Brier Score}\label{sec:variant_Brier}

We show a proof of Theorem~\ref{theorem:Brier}.

\begin{proof}
We consider a fixed $c \sim C$, and let $t$ be a sample obtained from $T$.  Let $j$ be the index such that $\zeta_{j} < c \leq \zeta_{j+1}$.  Since Assumption~\ref{assumption:independence} holds, we have $\mathrm{Pr}(\zeta_{i} < t \leq \zeta_{i+1} | C=c) = \mathrm{Pr}(\zeta_{i} < t \leq \zeta_{i+1}) = F(\zeta_{i+1}) - F(\zeta_{i}) = f_{i}$ for any $i < j$, $\mathrm{Pr}(\zeta_{j} < t \leq c | C=c) = F(c) - F(\zeta_{j})$, and $\mathrm{Pr}(c < t| C=c) = \mathrm{Pr}(c < t) = 1 - F(c)$.  Hence, we have
\begin{eqnarray*}
\lefteqn{\mathop{\mathbb{E}}_{t \sim T | C=c} [S_{\rm Cen-Brier}(\hat{F}, (z, \delta); \{ w_{k} \}_{k=0}^{B-1}, \{ \zeta_{k} \}_{k=0}^{B})]} \\
& = & \sum_{i < j} \mathrm{Pr}(\zeta_{i} < t \leq \zeta_{i+1} | C=c) \left( (1- \hat{f}_{i})^2 + \sum_{k \neq i} \hat{f}_{k}^2 \right) \\
& & + \mathrm{Pr}(\zeta_{j} < t \leq c | C=c)  \left( (1- \hat{f}_{j})^2 + \sum_{k \neq j} \hat{f}_{k}^2 \right) \\
& & + \mathrm{Pr}(c < t | C=c) \left( w_{j} (1- \hat{f}_{j})^2 + (1-w_{j}) \hat{f}_{j}^2 + \sum_{i < j} \hat{f}_{i}^2 + \sum_{i > j} \left( w_{i} (1-\hat{f}_{i})^2 + (1-w_{i}) \hat{f}_{i}^{2} \right) \right) \\
& = & \sum_{i < j} f_{i} \left( (1- \hat{f}_{i})^2 + \sum_{k \neq i} \hat{f}_{k}^2 \right) + (F(c) - F(\zeta_{j}))  \left( (1- \hat{f}_{j})^2 + \sum_{k \neq j} \hat{f}_{k}^2 \right) \\
& & + (1 - F(c)) \left( w_{j} (1- \hat{f}_{j})^2 + (1-w_{j}) \hat{f}_{j}^2 + \sum_{i < j} \hat{f}_{i}^2 + \sum_{i > j} \left( w_{i} (1-\hat{f}_{i})^2 + (1-w_{i}) \hat{f}_{i}^{2} \right) \right) \\
& = & \sum_{i} \left( f_{i} (1- \hat{f}_{i})^2 + (1-f_{i}) \hat{f}_{i}^2 \right) \\
& = & \sum_{i} (\hat{f}_{i}^2 - 2 f_{i} \hat{f}_{i} + f_{i}),
\end{eqnarray*}
where we used
\[
w_{i} =
\begin{cases}
0 & \mbox{if} \ \delta=1 \ \mbox{and} \ \zeta_{i+1} < z=t, \\
1 & \mbox{if} \ \delta=1 \ \mbox{and} \ \zeta_{i} < z = t \leq \zeta_{i+1}, \\
0 & \mbox{if} \ z \leq \zeta_{i} \\
\end{cases}
\]
for the first equality and
\[
w_{i} =
\begin{cases}
 (F(\zeta_{i+1}) - F(c))/(1 - F(c)) & \mbox{if} \ \delta=0 \ \mbox{and} \ i=j, \\
 f_{i}/(1-F(c)) & \mbox{if} \ \delta=0 \ \mbox{and} \ i > j
\end{cases}
\]
for the last equality.

Hence we have
\begin{eqnarray}
\lefteqn{\mathop{\mathbb{E}}_{t \sim T | C=c} [S_{\rm Cen-Brier}(\hat{F}, (z, \delta); \{ w_{i} \}_{i=0}^{B-1}, \{ \zeta_{i} \}_{i=0}^{B})] - \mathop{\mathbb{E}}_{t \sim T | C=c} [S_{\rm Cen-Brier}(F, (z, \delta); \{ w_{i} \}_{i=0}^{B-1}, \{ \zeta_{i} \}_{i=0}^{B})]} \nonumber \\
& = & \sum_{i} (\hat{f}_{i}^2 - f_{i}^2 - 2 f_{i} (\hat{f}_{i} - f_{i})) \qquad \qquad \qquad \qquad \qquad \qquad \qquad \qquad \qquad \qquad \qquad \qquad \qquad \qquad \nonumber \\
& = & \sum_{i} (\hat{f}_{i} - f_{i})^2 \nonumber \\
& \geq & 0.  \label{ineq:Brier}
\end{eqnarray}
Note that the equality holds only if $\hat{f}_{i} = f_{i}$ holds for all $i$.

Since Inequality~(\ref{ineq:Brier}) holds for any $c \sim C$, we have
\[
\mathop{\mathbb{E}}_{t \sim T, c \sim C} [S_{\rm Cen-Brier}(\hat{F}, (z, \delta); \{ w_{i} \}_{i=0}^{B-1}, \{ \zeta_{i} \}_{i=0}^{B})] \geq \mathop{\mathbb{E}}_{t \sim T, c \sim C} [S_{\rm Cen-Brier}(F, (z,\delta); \{ w_{i} \}_{i=0}^{B-1}, \{ \zeta_{i} \}_{i=0}^{B})],
\]
which means that $S_{\rm Cen-Brier}(\hat{F}, (z, \delta); \{ w_{i} \}_{i=0}^{B-1}, \{ \zeta_{i} \}_{i=0}^{B})$ is proper.
\end{proof}

\section{Additional Experiments}\label{sec:experiments_appendix}

We investigated the differences of the prediction performances between $S_{\rm Cen-log}$ (defined in Eq.~(\ref{eq:logarithmic_censored})) and $S_{\rm Cen-log-simple}$ (defined in Eq.~(\ref{eq:logarithmic_simple})) by using $S_{\rm Cen-log-simple}$, D-calibration, and KM-calibration as metrics to determine the parameter $B$. Tables~\ref{table:B8}--~\ref{table:B32} show the results for $B=8, 16, 32$, respectively, where each number shows the mean and variance of the values obtained by five random runs and the bold numbers were used to emphasize the difference between two scoring rules. These results show that the prediction performances of these two scoring rules were similar for the prostateSurvival and support datasets even for $B=8$.  However, they showed different prediction performances for the flchain dataset for $B=8$ and $B=16$, but the performance difference was negligible for $B=32$.  Therefore, we used $B=32$ in the other experiments in this paper.

\begin{table}[t]
  \caption{Comparison between two extensions of logarithmic score for $B=8$.}
  \label{table:B8}
  \centering
  \begin{tabular}{ll|lll}
  Metric & Loss Function & flchain & prostateSurvival & support \\
  \midrule
  $S_{\rm Cen-log-simple}$ & $S_{\rm Cen-log}$ & $6.4618 \pm 0.1204$ & $1.3460 \pm 0.0476$ & $1.5422 \pm 0.0704$ \\
                           &$S_{\rm Cen-log-simple}$ & $6.4176 \pm 0.1266$ & $1.3447 \pm 0.0451$ & $1.5368 \pm 0.0701$ \\
  \midrule
  D-calibration & $S_{\rm Cen-log}$ & $\bm{0.0045} \pm 0.0004$ & $0.0002 \pm 0.0000$ & $0.0370 \pm 0.0032$ \\
           & $S_{\rm Cen-log-simple}$ & $\bm{0.0127} \pm 0.0013$ & $0.0002 \pm 0.0001$ & $0.0349 \pm 0.0024$ \\
  \midrule
  KM-calibration & $S_{\rm Cen-log}$ & $\bm{0.0048} \pm 0.0026$ & $0.0048 \pm 0.0028$ & $0.0057 \pm 0.0027$ \\
             & $S_{\rm Cen-log-simple}$ & $\bm{0.0614} \pm 0.0081$ & $0.0083 \pm 0.0024$ & $0.0061 \pm 0.0033$ \\
   \\
  \end{tabular}
  \caption{Comparison between two extensions of logarithmic score for $B=16$.}
  \label{table:B16}
  \centering
  \begin{tabular}{ll|lll}
  Metric & Loss Function & flchain & prostateSurvival & support \\
  \midrule
  $S_{\rm Cen-log-simple}$ & $S_{\rm Cen-log}$ & $3.6774 \pm 0.0386$ & $1.2880 \pm 0.0247$ & $1.6017 \pm 0.0733$ \\
                          & $S_{\rm Cen-log-simple}$ & $3.6676 \pm 0.0424$ & $1.3447 \pm 0.0451$ & $1.6008 \pm 0.0731$ \\
  \midrule
  D-calibration & $S_{\rm Cen-log}$ & $\bm{0.0005} \pm 0.0002$ & $0.0001 \pm 0.0000$ & $0.0147 \pm 0.0020$ \\
          & $S_{\rm Cen-log-simple}$ & $\bm{0.0013} \pm 0.0004$ & $0.0002 \pm 0.0000$ & $0.0143 \pm 0.0021$ \\
  \midrule
  KM-calibration & $S_{\rm Cen-log}$ & $0.0117 \pm 0.0046$ & $0.0142 \pm 0.0036$ & $0.0149 \pm 0.0080$ \\
             & $S_{\rm Cen-log-simple}$ & $0.0162 \pm 0.0049$ & $0.0158 \pm 0.0063$ & $0.0158 \pm 0.0100$ \\
  \\
  \end{tabular}
  \caption{Comparison between two extensions of logarithmic score for $B=32$.}
  \label{table:B32}
  \centering
  \begin{tabular}{ll|lll}
  Metric & Loss Function & flchain & prostateSurvival & support \\
  \midrule
  $S_{\rm Cen-log-simple}$ & $S_{\rm Cen-log}$ & $1.5054 \pm 0.0508$ & $1.3608 \pm 0.0295$ & $1.8307 \pm 0.0452$ \\
                           & $S_{\rm Cen-log-simple}$ & $1.5059 \pm 0.0513$ & $1.3609 \pm 0.0301$ & $1.8296 \pm 0.0446$ \\
  \midrule
  D-calibration & $S_{\rm Cen-log}$ & $0.0003 \pm 0.0001$ & $0.0001 \pm 0.0000$ & $0.0063 \pm 0.0009$ \\
          & $S_{\rm Cen-log-simple}$ & $0.0003 \pm 0.0001$ & $0.0001 \pm 0.0000$ & $0.0062 \pm 0.0012$ \\
  \midrule
  KM-calibration & $S_{\rm Cen-log}$ & $0.0206 \pm 0.0049$ & $0.0312 \pm 0.0084$ & $0.0299 \pm 0.0115$ \\
             & $S_{\rm Cen-log-simple}$ & $0.0213 \pm 0.0049$ & $0.0343 \pm 0.0102$ & $0.0288 \pm 0.0127$ \\
  \end{tabular}
\end{table}


\end{document}